\def\changeBibColor#1{%
	\in@{#1}{}
	\ifin@\color{blue}\else\normalcolor\fi
}
\begin{document}

\title{Age of Information Aided Intelligent  Grant-Free Massive Access for Heterogeneous mMTC Traffic}

\author{Zhongwen Sun, Wei Chen, \IEEEmembership{Senior Member, IEEE}, Yuxuan Sun, \IEEEmembership{Member, IEEE}, Bo Ai, \IEEEmembership{Fellow, IEEE}
\thanks{Zhongwen Sun, Wei Chen, and Bo Ai are with the State Key Laboratory of Advanced Rail Autonomous Operation and the School of Electronic and Information Engineering, Beijing Jiaotong University, Beijing 100044, China (e-mail: 22120121@bjtu.edu.cn; weich@bjtu.edu.cn; boai@bjtu.edu.cn). Yuxuan Sun is with the School of Electronic and Information Engineering, Beijing Jiaotong University, Beijing 100044, China (e-mail: yxsun@bjtu.edu.cn).
	
}
\thanks{Part of this work has been presented in IEEE MLSP 2024\cite{10734800}.}}



\maketitle

\begin{abstract}
With the arrival of 6G, the Internet of Things (IoT) traffic is becoming more and more complex and diverse. To meet the diverse service requirements of IoT devices, massive machine-type communications (mMTC) becomes a typical scenario, and more recently, grant-free random access (GF-RA) presents a promising direction due to its low signaling overhead. However, existing GF-RA research primarily focuses on improving the accuracy of user detection and data recovery, without considering the heterogeneity of traffic. In this paper, we investigate a non-orthogonal GF-RA scenario where two distinct types of traffic coexist: event-triggered traffic with alarm devices (ADs), and status update traffic with monitor devices (MDs). The goal is to simultaneously achieve high detection success rates for ADs and high information timeliness for MDs. First, we analyze the age-based random access scheme and optimize the access parameters to minimize the average age of information (AoI) of MDs. Then, we design an age-based prior information aided autoencoder (A-PIAAE) to jointly detect active devices, together with learned pilots used in GF-RA to reduce interference between non-orthogonal pilots. In the decoder, an Age-based Learned Iterative Shrinkage Thresholding Algorithm (LISTA-AGE) utilizing the AoI of MDs as the prior information is proposed to enhance active user detection. Theoretical analysis is provided to demonstrate the proposed A-PIAAE has better convergence performance. Experiments demonstrate the advantage of the proposed method in reducing the average AoI of MDs and improving the successful detection rate of ADs. 
\end{abstract}

\begin{IEEEkeywords}
Age of information, grant-free random access, mMTC, heterogeneous traffic, deep learning.
\end{IEEEkeywords}

\section{Introduction}
By 2030, digitalization and data-centric approaches are expected to transform our societies, driven by key verticals such as connected industries, intelligent transport systems and smart cities \cite{2004.14146}. Massive machine-type communication (mMTC) (or massive communication in 6G) is one of the main drivers of this digitization. Moreover, 6G will be an agile and efficient convergent network serving a set of diverse service classes. MMTC-specific 6G key performance indicators (KPI) will be much more stringent than those considered for 5G, and include a diverse set of novel metrics not considered before. For example, smart home applications require reliable real-time operation guarantees and energy efficiency across distributed nodes; vehicle monitoring applications demand quality assurances for real-time data monitoring \cite{qos-difference}.

Generally, there are two typical types of traffic, i.e., event-triggered traffic and status update traffic, which coexist in mMTC. Some devices function as alarm devices (ADs), responsible for transmitting urgent information such as fire and earthquake alerts. This will result in event-triggered traffic, which requires high probability of successful access. For status update traffic, the freshness of uploaded information from monitor devices (MDs) is crucial for updating the system status in applications like autonomous vehicles, smart agriculture and smart city. In those application, the freshness of information is critical for environment perception, real-time monitoring and control. For example, within the realm of smart agriculture \cite{ABBAS2023199}, the freshness of information including humidity, temperature and moisture are indispensable for the effective operation of monitoring systems. To measure the information freshness at the receiver, a new metric called age of information (AoI) is introduced \cite{5984917}, which is a function of both how often packets are transmitted and how much delay packets experience in the communication network.

The prerequisite for supporting these traffic is the successful and efficient access of the massive devices in mMTC. Due to the characteristics of small packet transmission, the signaling interactions in the traditional four-step random access \cite{3gpp_ts_38_211} will incur huge signaling overheads, seriously affecting the spectral efficiency. A prospective solution is grant-free random access (GF-RA), where channel resources are accessed by active devices without obtaining permission from the base station (BS). This absence of scheduling, however, leads the BS to receive a composite signal from all active devices, making it challenging to identify individual devices and recover their signals. By leveraging the sporadic communication characteristics in mMTC scenario, which means that only a small number of devices are active at the same time among the huge total number of devices, the channel estimation, active user detection (AUD) and data recovery problems can be formulated as an under-determined sparse linear inverse problem \cite{6125356,bockelmann2013compressive,6629742}. The problem can then be effectively solved using compressed sensing (CS) algorithms such as the iterative shrinkage thresholding algorithm (ISTA) \cite{doi:10.1137/080716542}, approximate message passing (AMP) \cite{5503193} and sparse Bayesian learning\cite{1315936}.

Different from the previous work in \cite{10734800} that only considered the status update traffic and minimized the average AoI,
in this paper, we consider the massive access problem for heterogeneous traffic in mMTC scenarios, in which ADs that require high successful detection rate and MDs that require high information freshness coexist. To support massive devices with varying access requirements, we first enable heterogeneous traffic devices to share pilot resources in a non-orthogonal manner. To minimize the average AoI of MDs, an optimized age-based random access (ARA) scheme is designed. Subsequently, AoI is utilized as prior information to enhance AUD performance. To further reduce interference between the two types of devices, an autoencoder is designed to optimize the non-orthogonal pilots of devices.
The main contributions of this paper are summarized as follows: 
\begin{itemize}
	
	\item[1)] We consider heterogeneous traffic in mMTC, with the goal of achieving a high successful detection rate for ADs and high information timeliness for MDs simultaneously. We first optimize the age threshold and access probability to reduce the average AoI of MDs. Leveraging CS theory, we derive the successful detection rate and derive the average AoI of MDs as a function of the threshold and access probability. Then a two-dimensional search is used to determine the optimal access parameters.
	\item[2)] We design an age-based prior information aided autoencoder (A-PIAAE) to jointly reduce the system average AoI and improve the successful detection rate, together with learned pilot used in GF-RA. The encoder is a linear encoder through the pilot matrix, which is trained to reduce interference between devices. The decoder is a deep-unfolding NN called LISTA-AGE, using the prior information that devices with an AoI less than the optimal threshold are definitely not active under the optimized ARA scheme as an additional input to assist the AUD.
	\item[3)] We conduct theoretical analysis of the proposed A-PIAAE. We prove the convergence of the proposed A-PIAAE, and from the convergence error we can see that the prior information applied to MDs can improve the overall recovery performance.
	\item[4)] We construct various experiments to demonstrate the superiority of the proposed A-PIAAE in improving the recovery accuracy for ADs and reducing the average AoI for MDs. Experimental results show that the proposed method outperforms both the traditional CS methods and DL methods.
\end{itemize}

We organize the remainder of this paper as follows. In section \uppercase\expandafter{\romannumeral2}, we model the access problem for heterogeneous devices and propose an optimization objective that aims to simultaneously minimize the recovery error for ADs and the AoI for MDs. Section \uppercase\expandafter{\romannumeral3} introduces the optimized ARA scheme and the proposed A-PIAAE in detail. In section \uppercase\expandafter{\romannumeral4}, we analyze how the prior information improves the performance and prove the convergence of the proposed A-PIAAE and its performance gain. Experimental results in section \uppercase\expandafter{\romannumeral5} confirm the performance gain of the proposed scheme and algorithm. Conclusions are given in section \uppercase\expandafter{\romannumeral6}.

The notations in this paper are defined as follows. Boldface lowercase letters and uppercase letters represent column vectors and matrices, respectively. $(\cdot)^\top$ defines the transposition operation. $\|\cdot\|_0$ means the $\ell_0$ norm that counts the number of non-zeros, and $\|\cdot\|_1$ denotes the $\ell_1$ norm that calculates the sum of the absolute values of the vector elements. Let $\mathcal{S}$ denotes the set of indices, then $\mathbf{a}_\mathcal{S}$ and $\mathbf{A}_\mathcal{S}$ denotes the new vector composed of elements that are indexed by $\mathcal{S}$ and sub-matrices composed of the rows of matrix $\mathbf{A}$ contained in $\mathcal{S}$ respectively. $\mathbf{A}_i$ represents the $i$-th row of matrix $\mathbf{A}$.
\section{Related Work}
\subsection{Heterogeneous Services}
There have been many studies focusing on the efficient allocation of the radio time-frequency resources and preamble resources to guarantee the performance of heterogeneous services \cite{10819487,10654339,10517305,10238402,10190330,10670292}. To ensure the needs of mMTC devices and ultra-reliable low-latency communications (URLLC) devices, an intelligent preamble allocation scheme is proposed in \cite{10238402}, which uses hierarchical reinforcement learning to partition the URLLC devices exclusive preamble resource pool at the base station side and perform preamble selection within each slot at the device side. To allocate resources to different services such as enhanced mobile broadband, mMTC, URLLC, a promising scheme based on rate-splitting multiple access is studied in \cite{10190330}, which can provide a more flexible decoding order and theoretically has the largest achievable rate region than orthogonal multiple access (OMA) and non-orthogonal multiple access (NOMA). A novel  contention-based grant-free scheme in which mMTC/URLLC coexist is proposed in \cite{10670292}. MMTC devices transmit packet replicas having different preambles in various time slots, capitalizing on the temporal domain.And the URLLC devices leverage pilot mixture and power diversity to meet the stringent latency and reliability requirements. Contention resolution is achieved through successive interference cancellation. However, there is a lack of research on the coexistence of status update service devices and alarm-driven devices in mMTC, aiming to fulfill both the high information timeliness requirements of MDs and the high successful detection rate demands of ADs. This paper incorporates device-type heterogeneity into a GF-RA optimization framework. Based on the AoI, we design a prior-information-aided access mechanism and detection algorithm to achieve coordinated performance optimization for heterogeneous devices.

\subsection{Deep Unfolding Network}
To enhance AUD performance and reduce detection time, deep learning (DL) has been used to solve sparse recovery problems \cite{9903376,10786246}. The deep-unfolding methods, for example, the learned ISTA (LISTA) \cite{10.5555/3104322.3104374}, learned AMP (LAMP) \cite{7934066}, analytic learned proximal gradient
method \cite{10571841} and alternating direction method of multipliers-based network \cite{10292744}, extend traditional iterative sparse recovery algorithms into neural network (NN) architectures and have demonstrated considerable promise in enhancing accuracy and reducing complexity.

Numerous studies have focused on modifying existing deep unfolding network architectures to further improve recovery performance and enhance convergence behavior. To guarantee the convergence of the network, a framework with safeguard is proposed in \cite{Heaton_Chen_Wang_Yin_2023}. The safegurad is activated and the conventional optimization algorithm replaces the current iterative update when it performs poorly or appears to diverge. Utilizing the information distilled from the initial data recovery phase as the prior information, a prior information aided network is proposed in \cite{9605579} to enhance AUD and channel estimation. Exploring the characteristics of asynchronous access and varying data lengths, a modified LAMP that incorporates a backward propagation algorithm is introduced \cite{10304065}, which exploits the three-level sparsity resulting from sporadic activity, symbol delays, and data length diversity. In \cite{10615768}, a parameter estimation module to adapt different active ratios and noise variance based on the expectation maximization algorithm is introduced within the learned vector approximate message propagation network. A refinement module is designed in \cite{10192549} to further advance the performance of LAMP by utilizing the spatial feature caused by the correlated sparsity pattern. 
 However, most studies on GF-RA concentrate on how to reduce inter-user interference and improve the performance of user detection and data recovery, treating all users as equally important. If GF-RA is considered with AoI as the metric, the aforementioned methods may not achieve optimal system performance. This work is based on a deep unfolding network architecture, where AoI of all devices is introduced as prior knowledge to design the LISTA-AGE framework. Furthermore, an end-to-end autoencoder architecture, A-PIAAE, is proposed to enable the joint optimization of pilot design and device detection, thereby simultaneously improving detection accuracy and overall system AoI performance.


\subsection{Age-optimal Random Access}
Previous studies focusing on optimizing AoI, such as \cite{8807257,9488702,8933047}, have explored scheduling policies for multiple access channels under centralized control. However, in mMTC scenarios where devices are decentralized, implementing such policies would necessitate significant communication and coordination overhead, making them impractical. Towards designing decentralized algorithms for minimizing
age of information, references \cite{9162973,9377549,10615953,10461010,10042422,10123405} consider the slotted-ALOHA frame, and an age-based scheme is further designed in \cite{9162973,9377549,10615953} in which a device is permitted to access the channel with constant probability only when its AoI is greater than a certain threshold. The expression for steady-state expected AoI is obtained and the access probability and threshold are optimized. In \cite{9791264}, adaptive status update scheme is studied and the threshold of AoI can further be adjusted through a transmission feedback mechanism.

The age-optimal scheduling are further studies with heterogeneous traffic. A optimal scheduling policy is proposed in \cite{9365698}, aiming at minimizing the average AoI of the energy harvesting node subject to the queue stability condition of the grid-connected node. A scheduling scheme is developed in \cite{10438833} that minimizes the long-term average AoI under the delay tolerance constraint, to meet the varying transmission timeliness requirements for different arrival models. The achievable performance of the average AoI for AoI-oriented users and the average secrecy rate for throughput-oriented users is investigated in \cite{10495345} in a multiuser heterogeneous uplink wireless network. 
Nevertheless, existing literature on AoI-optimal scheduling schemes for both single-service or heterogeneous-service scenarios focus on OMA strategies. When collision occurs, the access is failed and devices need to retransmit. Such orthogonal access schemes may not perform well for massive devices.

Some studies have concentrated on the AoI optimization in NOMA scheme. Asymptotic studies are carried out in \cite{10260312} to demonstrate that the use of the simplest form of NOMA is already sufficient to reduce the AoI of OMA by more than 40\%. Considering the downlink short-packet communication via NOMA, the expected weighted sum AoI is minimized in \cite{9500896} by optimizing the power allocation. Focusing on NOMA-assisted wireless backscatter networks, a joint optimization problem encompassing backscatter coefficient configuration, node matching, and time slot allocation is solved in \cite{10722857}. However, these studies do not concern the joint optimization of AUD and channel estimation during the initial access phase, which we want to solve by designing the access scheme and detection algorithms.

To bridge these gaps, this paper integrates GF-RA with AoI optimization by introducing CS framework and prior information. A distributed access model for heterogeneous services is constructed, in which threshold and access probability parameters are systematically optimized. Furthermore, a jointly trained detection architecture is designed to address the trade-off between AoI performance and detection accuracy under massive device access scenarios.

\section{System Description}
\begin{figure}[t]
	\centerline{\includegraphics[scale=0.34]{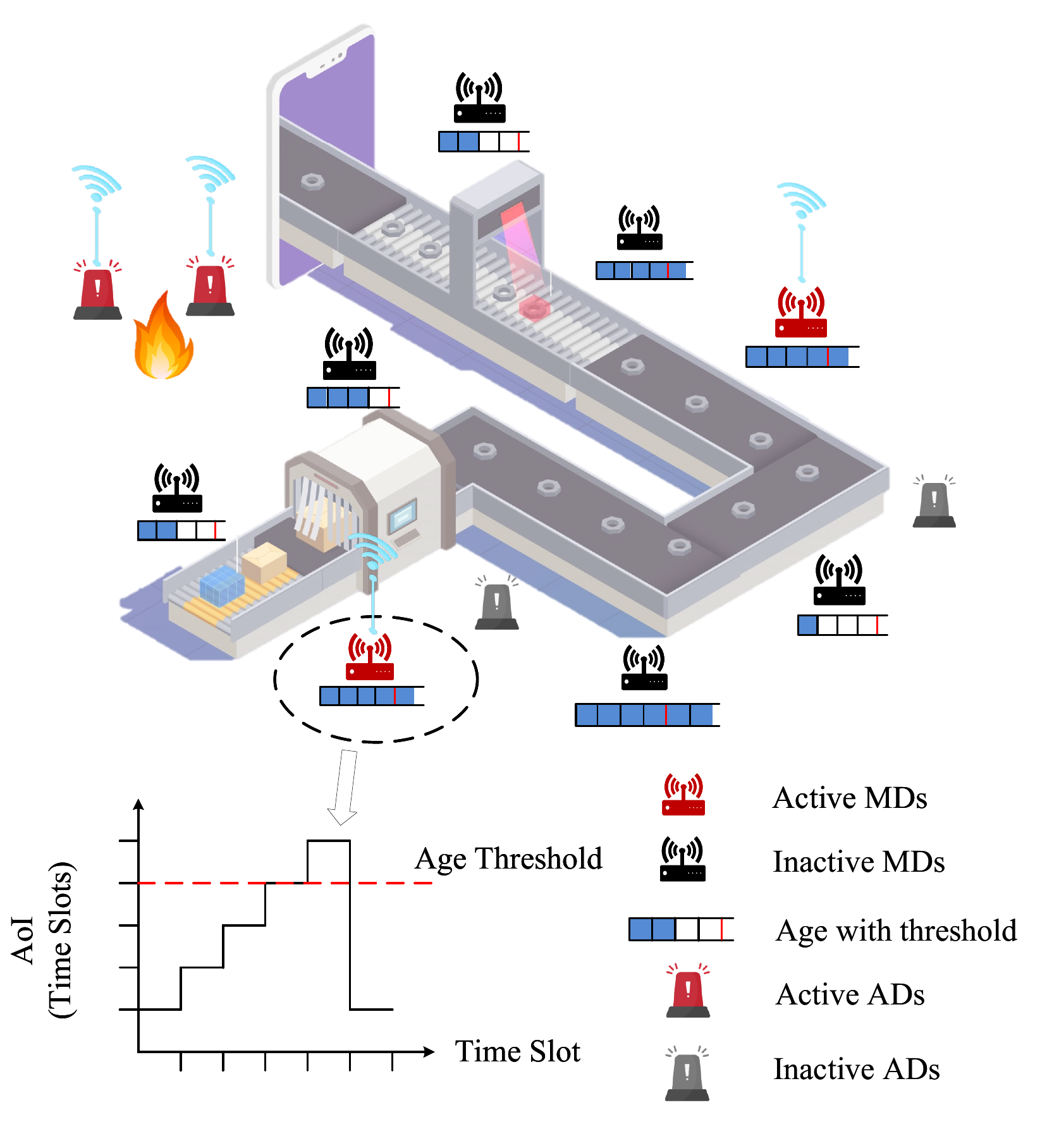}}
	\caption{An illustration of the system model. The queue under MDs represents their AoI and the red line in the queue represents the age threshold. The activation of MDs follows ARA scheme, where MDs with AoI below the threshold is inactive and become active with probability $p$ if AoI is larger than the threshold. The activation of ADs is triggered by the fire alarm.}
	\label{scenario}
\end{figure}
\subsection{System Model}
As shown in Fig. \ref{scenario}, we consider an uplink mMTC scenario where two types of devices are contained. $N$ ADs observe critical triggering information, while $K$ MDs monitor the environment and report monitoring information to the BS. These devices sporadically communicate with the serving BS. The set of ADs and MDs are denoted by $\mathcal {N}=\{1,\ldots,N\}$ and $\mathcal {K}=\{1,\ldots,K\}$, respectively. The system is time-slotted, indexed by $t$, where the duration of a time slot is the channel coherence time. In each time slot $t$, $N_t\, (N_t\ll N)$ ADs and $K_t \, (K_t\ll K)$ MDs are active. We consider a CS-based GF-RA scheme where each device is assigned a unique non-orthogonal pilot. The pilot matrixes of ADs and MDs are defined
as $\mathbf{A}=\left[\mathbf{a}_1, \mathbf{a}_2, \ldots, \mathbf{a}_{N}\right] \in \mathbb{C}^{M \times N}$ and $\mathbf{B}=\left[\mathbf{b}_1, \mathbf{b}_2, \ldots, \mathbf{b}_{K}\right] \in \mathbb{C}^{M \times K}$, where $\mathbf{a}_n$ and $\mathbf{b}_k$ denote the pilot sequence assigned to AD $n$ and MD $k$ respectively, $M$ denotes the length of pilot and $\mathbb{C}$ denotes the complex space. In each time slot, every active MD that decides to transmit generates a fresh status update packet just before transmission, which is known as the ``generate-at-will'' model \cite{8000687}. We neglect the queuing time of data in the network, assuming that in each time slot, the data from all active MDs arrive at the BS simultaneously. When a triggering event occurs in the same time slot, the corresponding ADs are activated. Then, these active devices directly send their own reserved pilot and the packet data to the BS without scheduling.

In time slot $t$, let $\mathcal{N}_t \subset\{1,2, \ldots, N_t\}$ denote the index of active ADs and $x_{n,t}$ denote the channel gain of AD $n$ at time $t$. Similarly,  $\mathcal{K}_t \subset\{1,2, \ldots, K_t\}$ denotes the index of active MDs and $g_{k,t}$  denotes the channel gain of MD $k$ at time $t$. Following the idea in \cite{7331960}, we apply our model in the orthogonal frequency division multiple (OFDM) system, where all active devices transmit their pilots and data on the same physical resource blocks, leading to interference between them. We assume that all transmitted
packages are  synchronized at the BS for simplicity, then the received superimposed pilot signal $\mathbf{y}_t \in \mathbb{C}^{M\times 1}$ at the BS is given by
\begin{equation}\label{cs}
	\begin{aligned}
		\mathbf{y}_t &= \sum\limits_{n \in \mathcal{N} } \alpha_{n,t}{{x_{n,t}}} {\mathbf{a}_n} + \sum\limits_{k \in \mathcal{K} } \beta_{k,t}{{g_{k,t}}} {\mathbf{b}_k} + \mathbf{n}_t\\ 
		& = \mathbf{A}\mathbf{x}_t + \mathbf{Bg}_t+ \mathbf{n}_t\\
		& = \mathbf{Ph}_t + \mathbf{n}_t,
	\end{aligned}
\end{equation}
where $\alpha_{n,t}$ is an indicator with $\alpha_{n,t}=1$ if AD $n$ is active in time slot $t$, otherwise $\alpha_{n,t}=0$. Similarly, $\beta_{k,t} = 1$ if MD $k$ is active in time slot $t$, otherwise $\beta_{k,t}=0$. $\mathbf{x}_t=\left[\alpha_{1,t}x_{1,t}, \alpha_{2,t}x_{2,t}, \ldots, \alpha_{N,t}x_{N,t}\right]^\top \in \mathbb{C}^{{N} \times 1}$  and $\mathbf{g}_t=\left[\beta_{1,t}g_{1,t}, \beta_{2,t}g_{2,t}, \ldots, \beta_{K,t}g_{K,t}\right]^\top \in \mathbb{C}^{{K} \times 1}$ are sparse vectors with $N_t$ and $K_t$ non-zero elements, respectively, due to the sporadic communication characteristic. Let $\mathbf{P}\triangleq [\mathbf{A},\mathbf{B}] \in \mathbb{C}^{M \times S} $ and $\mathbf{h}_t \triangleq [\mathbf{x}_t,\mathbf{g}_t]^\top \in \mathbb{C}^{S \times 1}$, where $S = N + K$ denotes the total devices of the system. We adopt the block-fading channel model, where in each time slot the channel follows independent quasi-static flat-fading. We assume the channel of any device $s$ in the system is $h_{s,t} = l_{s,t}z_{s,t}$, where $l_{s,t}$ is the path-loss determined by device location in time slot $t$ and $z_{s,t}$ follows $\mathcal{C N}(0,1)$.  $\mathbf{n}_t \in \mathbb{C}^{M \times 1}$ denotes the additive white Gaussian noise. Accordingly, the channel estimation and AUD is a sparse recovery problem that can be solved via CS algorithms.

\subsection{Device Activity}
We first define the average AoI to measure the information freshness of MDs. AoI of MD $k$ at time slot $t$, $\Delta_k(t), t=1,2,...,T$, is defined as the number of slots that have elapsed since the freshest status update packet of this MD was generated to date and successfully received by the BS. $\Delta_k(t)$ is equal to the number of slots of last time slot plus one since the most recent successful detection of MD $k$. In the case of a successful detection, the BS sets the AoI of successfully detected MDs to one, which is because we assume that packets are transmitted at the beginning of the time slots and received by BS at the end of the same time slots. The corresponding successfully detected MDs will receive the acknowledgment message from BS and reset their AoI to one. Accordingly, the evolution of the instantaneous AoI $\Delta_k(t)$ can be expressed as
\begin{equation}\label{aaoi}
	\Delta_k(t+1)=\left\{\begin{array}{l}
		1,    \quad\qquad\qquad \text { if } \ I_k(t)=1, \\
		\Delta_k(t)+1, \quad \text { otherwise },
	\end{array}\right.
\end{equation}
where $I_k(t)$ is the indicator of successful detection. Particularly, $I_k(t)=1$ means that the active MD $k$ is successfully detected, and $I_k(t)=0$ means MD $k$ is inactive or unsuccessfully detected.
Based on the AoI evolution, the average AoI $\bar{\Delta}_k$ for each MD $k$ can be defined as 
\begin{equation}
	\bar{\Delta}_k=\lim _{T \rightarrow \infty} \frac{1}{T} \sum_{t=1}^T \Delta_k(t).
\end{equation}

We adopt ARA scheme for MDs, which is an age-aware extension of slotted-ALOHA. ARA scheme schedules MDs in a decentralized way, in which the decision of each MD at time slot $t$ is determined by its AoI at the beginning of this time slot. That is, if the instantaneous AoI is larger than the threshold ${\delta}$, MD becomes active with a fixed probability $p$. Otherwise, MD is inactive. 

Then, define the average AoI of MD ${\bar \Delta }(\delta,p)$ as a function of ${\delta}$ and $p$, which is derived in the subsequent section. This consideration is motivated by the intuition that if the AoI of a device is larger than the threshold, the BS has not updated the information of that device in a long time. This kind of device has a more urgent need for access and should be granted a higher access probability, while other devices access the channel with a lower probability (in this paper the probability is equal to 0) to mitigate interference. In other words, this scheme reduces the average AoI of MDs by allowing MDs that contribute more significantly to reducing the average AoI to access the channel with higher probability.

For ADs, their activation is event-triggered. For example, when a fire occurs, the corresponding ADs become active and send relevant information to the BS, requiring a high successful detection rate.

\subsection{Receiver Description}

\begin{figure}[t]
	\centerline{\includegraphics[scale=0.38]{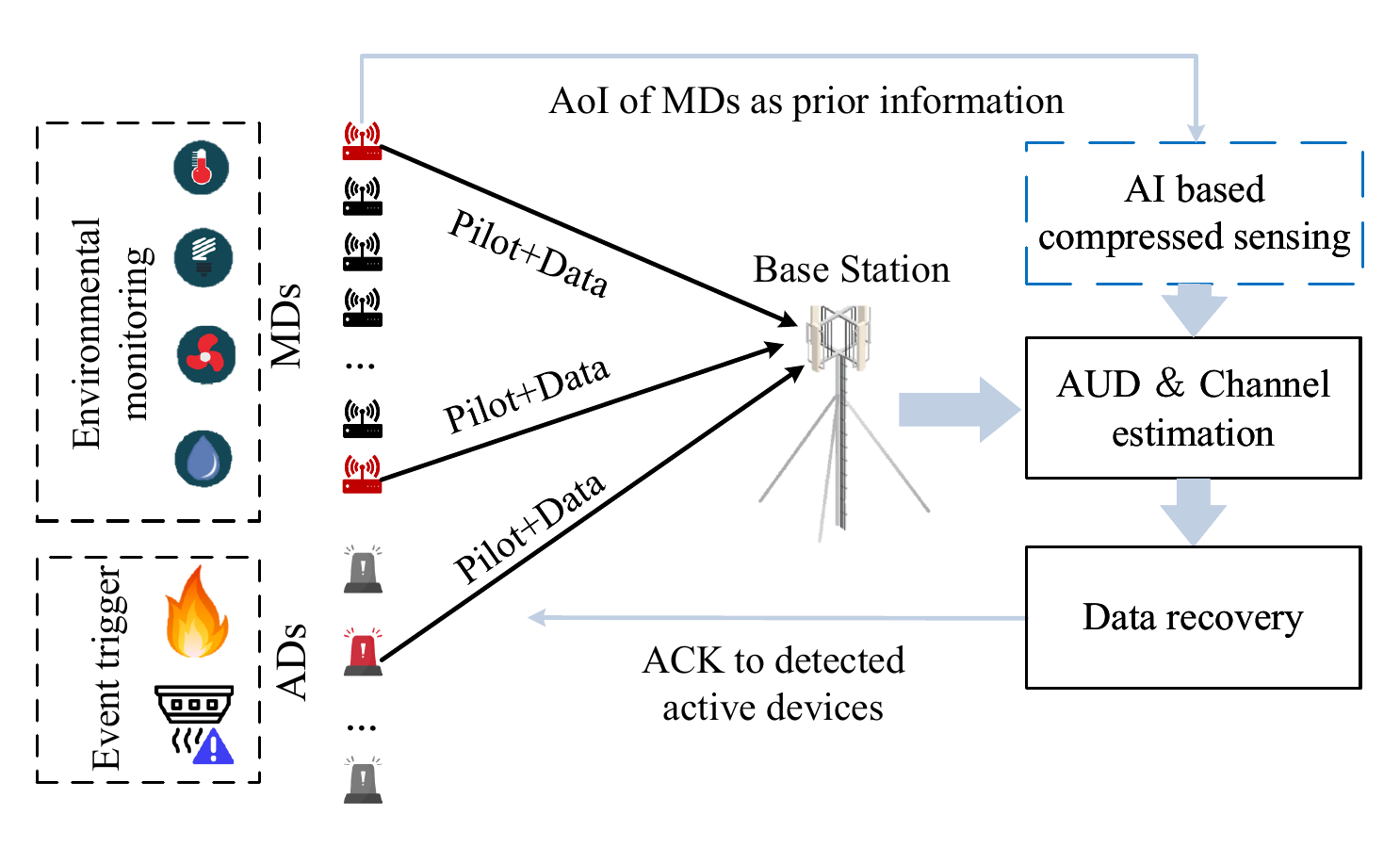}}
	\caption{An illustration of the access process of devices with heterogeneous traffic.}
	\label{process}
\end{figure}

The overall access process of devices with heterogeneous traffic is illustrated in Fig. \ref{process}. After receiving $\mathbf{y}_t$, the BS first performs AUD and channel estimation using AI-based CS methods. Then the estimated index set of active devices $\hat{\mathcal{K}}_t$, $\hat{\mathcal{N}}_t$ and estimated channel $\hat {\mathbf{h}}_t$ can be obtained. Following the identification of the active device index set, data recovery can be performed efficiently using the least squares algorithm. AUD and channel estimation are the most critical steps in the entire data recovery process.

Let $S_t$ denote the total active devices in slot $t$ with $S_t = N_t+K_t$. Then $\mathbf{h}_t$ is a sparse vector with $\|\mathbf{h}_t\|_0=S_t$. Known $\mathbf{y}_t$ and the pilot matrix $\mathbf{P}$, the estimation of channel vector $\mathbf{h}_t$ can be formulated into an $\ell_0$ norm minimization problem using CS methods
\begin{equation}\label{l0}
	\min _{\mathbf{h}_t}\|\mathbf{h}_t\|_0 \quad \text { s.t. }\left\|\mathbf{y_t}-\mathbf{P h}_t\right\|_2^2 \leq \varepsilon,
\end{equation}
where $\varepsilon>0$ depends on the noise level. As this $\ell_0$ norm minimization problem is NP-hard, a common way to solve this kind of problem is to replace it with $\ell_1$ norm minimization
\begin{equation}\label{l1}
	\min _{\mathbf{h}_t}\|\mathbf{h}_t\|_1 \quad \text { s.t. }\left\|\mathbf{y}_t-\mathbf{P h}_t\right\|_2^2 \leq \varepsilon.
\end{equation}
With the estimated channel $\hat {\mathbf{h}}_t\in \mathbb{C}^{S \times 1}$, we can identify which pilots are selected, indicating the active devices.

In GF-RA, the BS first performs AUD and channel estimation, followed by data recovery based on the estimated channels. However, if the channel estimation error is significant, it can lead to the failure of the subsequent data recovery process. In this paper, considering that the performance of data recovery highly relies on the AUD and channel estimation in \eqref{cs}, we assume a device is detected successfully if the channel estimation error is smaller than a threshold $\tau$ and the data of device can always be recovered accurately \cite{9605579},\textcolor{blue}{\cite{8961111}}. So the indicator in \eqref{aaoi} for determining whether the device $k$ is successfully detected can be expressed as 
\begin{equation}\label{criteria}
	{I_k }(t) = \left\{ {\begin{array}{*{20}{l}}
			{1,{\rm{     \quad if }}\;k \in {\cal K}_t{\rm{ \, and \, }}\left| {g_{k,t} - \hat g_{k,t}} \right| \le \tau,}\\
			{0,\quad {\rm{ otherwise }},}
	\end{array}} \right.
\end{equation}
where $\hat g_{k,t}$ is the estimated channel of MD $k$. As shown in \eqref{criteria}, the AUD criterion exhibits robustness against a certain degree of channel estimation imperfection. Even when the channel state information is imperfect, active devices can still be accurately detected within a bounded error margin.

As mentioned before, if one MD is inactive or fails to access the BS, its AoI is increased by one based on its value from the previous time slot. If the MD is successfully detected, its AoI is reset to one. The evolution of AoI can be modeled as a Discrete-Time Markov Chain according to \cite{9162973}, from which the stationary probability distribution over all possible AoI values can be derived. Then, the average AoI for each MD can be derived as
	\begin{equation}\label{average AoI}
		\bar{\Delta}(\delta,p,q)=\sum_{c=1}^{\infty} c \pi_c=\frac{\delta}{2}+\frac{1}{p q}-\frac{\delta}{2(\delta p q+1-p q)},
	\end{equation}
	where $\pi_c$ $(c=1,2, \cdots)$ denotes the stationary probability of the instantaneous AoI being $c$, and $q$ denotes the probability of successful detection which can be derived from $\delta$ and $p$. So $\bar{\Delta}(\delta,p,q)$ can be reformulated as $\bar{\Delta}(\delta,p)$

For the optimization object, this paper investigates two types of devices with distinct access requirements and performance metrics. ADs require a high successful detection rate, while MDs prioritize delivering the freshest possible information. Instead of directly solving \eqref{l1}, our goal is to minimize the average AoI of MDs and recover the sparse $\mathbf{h}_t$ simultaneously, i.e.,
\begin{subequations}
	\begin{align}
		\min_{\delta,p,\mathbf{P}}& \ {\bar \Delta }(\delta,p) + \|h_{\{\mathbf{P}\}}(\mathbf{y}_t)\|_1  \\
		\text { s.t. }\left\|\mathbf{y}_t-\mathbf{P} h_{\{\mathbf{P}\}}(\mathbf{y}_t)\right\|_2^2 \leq \varepsilon, \label{c1}\\
		& \delta \in \mathbb{Z}_+,\label{c2}\\
		& p \in [0,1],\label{c3}
	\end{align}
\end{subequations}
where $ \mathbf{h}_t = h_{\{\mathbf{P}\}}(\mathbf{y}_t)$ indicates the NN function of recovery $\mathbf{h}_t$ parameterized by the pilot matrix $\mathbf{P}$. $\mathbb{Z}_+$ denotes positive integer. Constraint (\ref{c1}) ensures that the recovery error is less than $\varepsilon$. Constraint (\ref{c2}) indicates that the age threshold $\delta$ is a positive integer. Constraint (\ref{c3}) indicates that the value of access probability $p$ lies in the range of 0-1. The optimized parameters are the age threshold $\delta$, the access probability $p$ and the parameter matrix $\mathbf{P}$ in NN for recovering $\hat{\mathbf{h}}_t$.
\section{Random access scheme and active user detection}
In this section, we first optimize the ARA scheme by optimizing the access parameters of MDs. Then we propose an autoencoder A-PIAAE to jointly optimize the pilots of devices and recover the channel vector $\mathbf{h}$. Furthermore, based on the proposed ARA scheme, we design an age-based decoder LISTA-AGE in A-PIAAE by exploiting AoI as the prior information.

\subsection{Age-Based Random Access Scheme }\label{AA}
The average AoI achieved by the ARA scheme has been demonstrated to be around half the minimum AoI achieved by regular slotted-ALOHA in \cite{9163053,9377549}. However, these analyses are based on collision channel models, where a collision results in access failure if two or more devices simultaneously attempt to access the channel. In the mMTC scenario, characterized by massive devices access, there is a significant likelihood that multiple devices will access the same OFDM resources, leading to resource conflicts. To address this challenge, we propose to leverage CS techniques, enabling the BS to successfully detect active devices despite such conflicts. In this subsection, we analyze the successful detection rate based on the CS theory and derive the optimal access parameters.

\begin{figure}[t]
	\centerline{\includegraphics[scale=0.7]{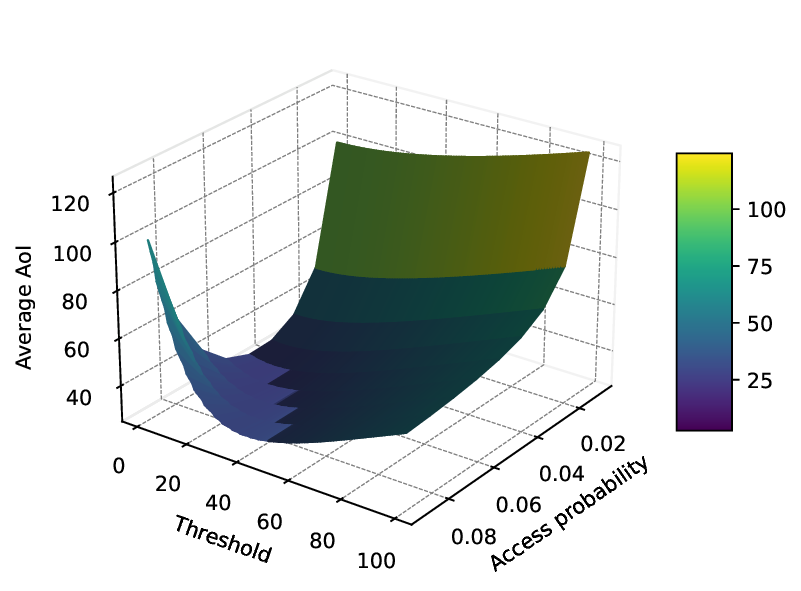}}
	\caption{A two-dimensional schematic diagram of the AoI function with parameters $p$ and $\delta$ is presented. Since the AoI becomes significantly large when $p > 0.1$, the range of $p$ is restricted to 0 - 0.1. The diagram indicates that the AoI has a minimum value, which can be determined using two-dimensional search.}
	\label{3D}
\end{figure}
In \eqref{average AoI}, the probability of successful detection $q$ is related to ${\delta}$ and $p$, and is derived according to the probability of collision between devices in \cite{9162973}. Nevertheless, it is not easy to derive in GR-RA. Using CS theory, we derive the expression for $q$, then jointly optimize the threshold parameter ${\delta}$ and $p$ to minimize the average AoI of MDs.


A useful rule of thumb is discovered in \cite{0901.3403} that the oversampling factor $\frac{M}{ S_t}$ for $\ell_1$ reconstruction satisfies:
\begin{equation}
		\frac{M}{ S_t} = \log _2\left(1+\frac{S}{S_t}\right).
\end{equation}
Since a larger $M$ means that the more samples are taken in the CS problem, the easier the problem is to be solved, we have that the CS problem \eqref{l1} can be solved successfully when the following condition holds
\begin{equation}
		M \geqslant S_t \log _2\left(1+\frac{S}{S_t}\right)\buildrel \Delta \over = \phi({S_t}).
\end{equation}
For fixed $M$ and $S$, the derivative of $\phi(S_t)$ with respect to $S_t$ is given by
\begin{equation}
		\frac{{\partial \phi(S_t)}}{{\partial S_t}} = {\log _2}\left(1 + \frac{S}{S_t}\right) - \frac{S}{({S_t + S})\ln2}.
\end{equation}Since $\mathbf{h}_t$ is a sparse vector, sparsity $S_t$ is a positive integer far less than $S$,
 $\frac{{\partial \phi(S_t)}}{{\partial S_t}}>0$ and $\phi(S_t)$ is a monotonically increasing function. Given $M$ and $S$, we can derive the maximum sparsity $S_{\max}$, which means the CS problem can be solved if the sparsity is less than $S_{\max}$. And the maximum number of active MDs is $S_{\max}-N_t$. 

 \begin{algorithm} [!t] 
	\caption{The two-dimension search algorithm}  \label{search algorithm}
	\begin{algorithmic}[1]
		\State Initialization:
		$p_{min} = 0$, $p_{max} = 1$, $p_{step} = 0.01$, $\delta_{min} = 1$, $\delta_{max} = 100$, $\delta_{step} = 1$. $\bar{\Delta}_{min} = +\infty$
		\State Generate the value set of $p$ and $\delta$: $ \{p_{min},p_{min}+p_{step},...,p_{max}\}$ and $ \{\delta_{min},\delta_{min}+\delta_{step},...,\delta_{max}\}$.
		\For {each $(p,\delta)$ pairs in the value set}
		\State calculate the average AoI $\bar{\Delta}$ according to \eqref{average AoI} and \eqref{q}.
		\If{$\bar{\Delta} < \bar{\Delta}_{min}$}
		\State Update $\bar{\Delta}_{min} = \bar{\Delta}$.
		\State Update the optimal access parameters pair $(p_m,\delta_m)$.
		\EndIf
		\EndFor 
		\State Return $p_m,\delta_m$
	\end{algorithmic}
\end{algorithm}

Next, we calculate the probability that the number of active MDs is less than or equal to $S_{\max}-N_t$, which is the successful detection rate. Since in each time slot, the activity of each MD is independent, the number of active MDs at a given slot obeys a binomial distribution. MDs with an AoI less than the threshold $\delta$ remain inactive and are excluded from the count of active MDs. We simplify the assumption that the AoI of every MD follows a uniform distribution over the range from 1 to $a_{\max}$, where $a_{\max}$ is the maximum AoI during the AoI evolution. Via the simulations in section IV, we will simulate the real AoI evolution process and verify that the access parameters determined under this assumption closely resemble the optimal parameters. The number of active MDs $K_t$ in slot $t$ follows the binomial distribution
	\begin{equation}\label{binomial distribution}
		K_t \sim B\left( {\frac{{a_{\max} - \delta }}{{a_{\max}}}K,p} \right).
	\end{equation} 
Then, the probability of $K_t$ less than or equal to $S_{\max}-N_t$, which is the successful detection rate of MDs, can be calculated using the theorem 1, as show below.
\newtheorem{thm}{Theorem}
\begin{thm}[The successful detection rate]
	The successful detection rate can be approximated as
	\begin{equation}\label{q}
			\begin{aligned}
				q &=P\left( {K_t \le {S_{\max }-N_t}} \right)\\
				&=\sum\limits_{k=1}^{{S_{\max }-N_t}} {C_{\frac{{a_{\max} - \delta }}{{a_{\max}}}K}^k{p^k}{{(1 - p)}^{\frac{{a_{\max} - \delta }}{{a_{\max}}}K - k}}},
			\end{aligned}
		\end{equation}
		where $C_{\frac{{a_{\max} - \delta }}{{a_{\max}}}K}^k$ is the binomial coefficient.
\end{thm}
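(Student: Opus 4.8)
The plan is to derive $q$ as a direct corollary of the solvability characterization of the compressed-sensing problem established above, combined with the binomial model for $K_t$ in \eqref{binomial distribution}; the argument is essentially bookkeeping. By the oversampling rule of thumb $M/S_t = \log_2(1+S/S_t)$ and the monotonicity of $\phi(S_t)=S_t\log_2(1+S/S_t)$ just proved, the sparse recovery \eqref{l1} succeeds exactly when the total number of active devices obeys $S_t \le S_{\max}$, where $S_{\max}$ is the largest integer with $M \ge \phi(S_{\max})$. Since $S_t = N_t + K_t$ with $N_t$ the event-triggered number of active ADs treated as given, this is equivalent to $K_t \le S_{\max} - N_t$, so the successful-detection event for the MDs is precisely $\{K_t \le S_{\max}-N_t\}$ and $q = P(K_t \le S_{\max}-N_t)$.

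Next I would substitute the distribution of $K_t$. Under the ARA scheme only MDs whose instantaneous AoI exceeds $\delta$ are eligible to transmit; approximating the AoI as uniform on $\{1,\dots,a_{\max}\}$ makes a fraction $\tfrac{a_{\max}-\delta}{a_{\max}}$ of the $K$ MDs eligible, and each eligible MD transmits independently with probability $p$, giving $K_t \sim B\!\left(\tfrac{a_{\max}-\delta}{a_{\max}}K,\,p\right)$ as in \eqref{binomial distribution}. Evaluating the binomial CDF at $S_{\max}-N_t$,
\begin{equation*}
q = \sum_{k=0}^{S_{\max}-N_t} C_{\frac{a_{\max}-\delta}{a_{\max}}K}^{k}\, p^{k}(1-p)^{\frac{a_{\max}-\delta}{a_{\max}}K-k},
\end{equation*}
and dropping the $k=0$ summand — the event that no MD is active, which entails no detection — yields the stated expression.

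The obstacles are modeling rather than algebraic. The sharpest is the uniform-AoI assumption: it is what decouples the eligibility indicator from the transmission coin and turns the thinned count into a clean binomial, yet the true stationary AoI law is the geometric-type distribution implied by the DTMC of \eqref{aaoi}, not uniform. I would handle this exactly as the paper proposes, treating \eqref{q} as an approximation whose only purpose is to locate the minimizer of $\bar\Delta(\delta,p)$ through the two-dimensional search of Algorithm~\ref{search algorithm}, and deferring to the Section~IV simulations the check that the parameters it returns are close to those obtained under the exact AoI evolution. Two lesser points: $\tfrac{a_{\max}-\delta}{a_{\max}}K$ must be rounded to an integer for the binomial coefficient to be meaningful, and $S_{\max}$ comes from an average-case CS heuristic rather than a worst-case recovery guarantee, so $q$ is inherently an estimate of the true detection probability.
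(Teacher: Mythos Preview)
Your proposal is correct and follows the same route as the paper: the paper does not give a separate proof of Theorem~1 but rather derives it in the surrounding text exactly as you do, first using the monotonicity of $\phi(S_t)$ to reduce the successful-detection event to $K_t\le S_{\max}-N_t$, then invoking the binomial model \eqref{binomial distribution} induced by the uniform-AoI approximation to compute the CDF. Your discussion of the modeling caveats (uniform AoI vs.\ the true stationary law, rounding of $\tfrac{a_{\max}-\delta}{a_{\max}}K$, and the heuristic nature of the CS oversampling rule) is in fact more explicit than the paper's own treatment and aligns with its remark that the simulations of Section~V validate the approximation.
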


Substituting \eqref{q} into \eqref{average AoI}, the average AoI is a function of $p$ and ${\delta}$. As shown in Fig. \ref{3D}, the average AoI has a minimum value with respect to $p$ and ${\delta}$. According to Algorithm 1, given $M$ and $S$, we can determine the optimal $p_m$ and ${\delta}_m$ via a two-dimension search to minimize the average AoI of MDs. 
And it should be clarified that the optimal access parameters only depend on the pilot length $M$, the total number of devices $S$ and the number of active ADs $N_t$, and all the MDs are applied the same optimal parameters in ARA scheme.

\subsection{Autoencoder Design}
In the previous subsection, we analyze the ARA scheme and optimize its access parameters. The decentralized implementation of the optimized ARA scheme leads to the attainment of minimum average AoI for MDs. From \eqref{average AoI}, it is evident that the successful detection rate $q$ is a critical factor influencing the average AoI. An increase in $q$ (attributable to improvements in the detection algorithm, while $p$ and $\delta$ remain constant) results in a reduction in the average AoI. Similarly, ADs also rely on a high successful detection rate to transmit urgent messages effectively. In this subsection, we apply the DL technique to improve the successful detection rate for heterogeneous traffic devices.

As shown in Fig. \ref{figae}, we propose a data-driven autoencoder architecture, namely A-PIAAE, to jointly design the pilot matrix and recover the sparse signal for AUD.
In \cite{9174792,9754266}, the pilot matrix is jointly optimized with NN, using the DL autoencoder, which can effectively exploit the properties of the sparsity patterns. To make full use of the pilot resources, the two kinds devices are allocated non-orthogonal pilots. While this pilot resource allocation method enables support for a larger number of devices, the increase in the number of non-orthogonal pilots will lead to greater interference among them, complicating AUD via CS algorithms. The joint optimization of the pilot matrix using an autoencoder enhances the orthogonality among non-orthogonal pilots, effectively reducing pilot interference and thereby improving overall system performance.

For simplicity, the time index $t$ is omitted in the following expressions. As shown in Fig. \ref{figae}, the deep unfolding autoencoder consists of a linear encoder and a nonlinear decoder. The enconder is represented by
\begin{equation}
	g(\mathbf{h}):=\mathbf{P} \mathbf{h}+\mathbf{n},
\end{equation}
which mimics the pilot transmission in wireless channel. $\mathbf{P}$ is the concatenate matrix that contains the pilot matrices $\mathbf{A}$ and $\mathbf{B}$. Furthermore, we normalize the columns of the pilot matrix to constrain the energy of all pilot sequences to 1 in the power normalization module.
The nonlinear decoder is defined by
\begin{equation}\label{decoderp}
	f(\mathbf{y}):=\underset{\mathbf{h}}{\operatorname{argmin}}\|\mathbf{h}\|_1 \quad \text { s.t. }  \left\|\mathbf{y}-\mathbf{P h}\right\|_2^2 \leq \varepsilon,
\end{equation}
which we will elaborate in the next subsection.

\begin{figure*}[t]
	\centerline{\includegraphics[scale=0.4]{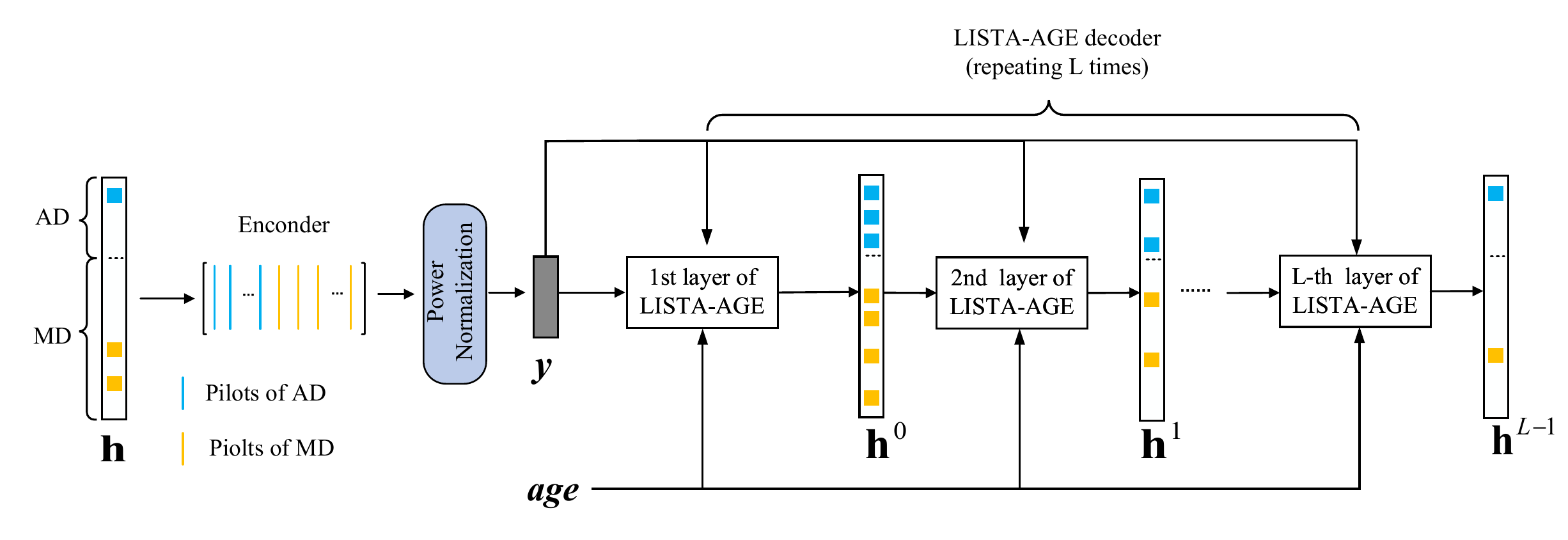}}
	\captionsetup{justification=centering}
	\caption{An illustration of A-PIAAE.}
	\label{figae}
\end{figure*}

\subsection{Age Aided Decoder}
NNs for solving \eqref{decoderp} already exist, such as the famous LISTA and its siblings. In conjunction with the ARA scheme, we leverage the prior knowledge that certain devices are guaranteed to be inactive. Incorporating this prior information, we modify the LISTA structure and design a novel decoder, LISTA-AGE, to enhance AUD performance.

We first introduce ISTA and LISTA, which are easier to theoretically analyze, followed by a discussion on how to incorporate prior information into the network and modify LISTA accordingly. Each iteration of ISTA has the form 
\begin{equation}\label{ista}
	{\mathbf{h}^{l + 1}} = {\eta _\mathrm{st}}(\omega {\mathbf{P}^\top}\mathbf{y} + ({\mathbf{I}_S} - \omega {\mathbf{P}^\top}\mathbf{P}){\mathbf{h}^l};\theta ),
\end{equation}
where ${\eta _\mathrm{st}}(\mathbf{x},\lambda)=\operatorname{sign}(\mathbf{x}) \max \{|\mathbf{x}|-\theta, 0\}$ denotes the soft-thresholding function, $\operatorname{sign}(\mathbf{x})$ is the sign function and $\theta$ is a pre-defined parameter. The purpose of this parameter is to shrink the values of elements less than it to zero to achieve a forced sparsity effect. $\mathbf{I}_S$ denotes the unit matrix of dimension $S$. The step size $\omega$ is usually taken to be the largest eigenvalue of $\mathbf{P}^\top\mathbf{P}$ empirically. However, the recovery performance of traditional CS methods highly relies on the certain conditions of the sensing matrix $\mathbf{P}$, such as the restricted isometry property (RIP) \cite{CRMATH_2008__346_9-10_589_0}. In mMTC scenarios, communication resource constraints limit pilot length, and the potentially large number of active devices further impacts the conditioning of the sensing matrix, increasing the likelihood of failing to meet the RIP constraint.
Deep unfolding method LISTA uses NN layers to replace $\omega {\mathbf{P}^\top}$ and ${\mathbf{I}_S} - \omega {\mathbf{P}^\top}\mathbf{P}$ in \eqref{ista}, i.e.,
\begin{equation}\label{lista}
	{\mathbf{h}^{l + 1}} = {\eta _\mathrm{st}}({\mathbf{W_1}\mathbf{y} + \mathbf{W_2}{\mathbf{h}^l};\theta )},
\end{equation}
and learns the parameter matrices $\mathbf{W_1}$, $\mathbf{W_2}$ and the threshold $\theta$.

We assume that the BS will update and record the AoI of all MDs after receiving the superimposed signal and detecting the active devices. Owing to the ARA scheme, the BS is aware that MDs with AoI below the threshold are guaranteed to be inactive. We exploit this prior information into the network and modify the soft-thresholding function in LISTA. The age information of MDs goes into a threshold function to indicate whether the $i$-element of $\mathbf{g}$ is determined to be zero, which is defined as
\begin{equation}
	{{\bf{\gamma}}_M}[i] = \left\{ {\begin{array}{*{20}{c}}
			0\\
			1
	\end{array}} \right.\begin{array}{*{20}{c}}
		{{\rm{  }}\mathbf{age}[i] \le \delta },\\
		{{\rm{  }}\mathbf{age}[i] > \delta },
	\end{array}
\end{equation}
where $\mathbf{\gamma}_M$ and $\mathbf{age}$ denote the indicator and age information vectors of all MDs, respectively, with dimension of $K$. ${\bf{\gamma}}_M[i]$ and $\mathbf{age}[i]$ are the $i$-th element of $\mathbf{\gamma}$ and $\mathbf{age}$, respectively. $\delta$ is the age threshold that we optimize in the previous subsection. Then, we concatenate an all-one vector of dimension $N$ to $\mathbf{\gamma}_M$ which indicates that we have no prior information about the zero elements of $\mathbf{x}$. Defined as ${\bf{\gamma }}=[\mathbf{1}^\top,{\bf{\gamma }}_M^\top]^\top$, we use this indicator to perform element-wise operations in the thresholding function. According to the indicator, the age aided thresholding operator directly sets the elements in $\mathbf{h}$, whose corresponding AoI is below the threshold $\delta$, to 0. It is defined as 
\begin{equation}
		\eta_{\mathrm{st}}(\mathbf{h}[i];{\bf{\gamma }},\theta^l)= \begin{cases}\ 0 & \mathbf{\gamma}[i]=0, \\ \mathbf{h}[i]-\theta^l & \mathbf{h}[i]>\theta^l, \mathbf{\gamma}[i]=1, \\ \mathbf{h}[i]+\theta^l & \mathbf{h}[i]<-\theta^l, \mathbf{\gamma}[i]=1, \\ 0 & -\theta^l \leq \mathbf{h}[i] \leq \theta^l, \mathbf{\gamma}[i]=1,\end{cases}
\end{equation}where $\mathbf{h}[i]$ is the $i$-th element of $\mathbf{h}$ and ${\theta ^l}$ is the thresholding parameter of the $l$-th layer. 

As shown in Fig. \ref{one_layer}, the $l$-th layer of the decoder is defined by the following iterative of the form
\begin{equation}\label{decoder}
	{\mathbf{h}^{l + 1}} = \eta _{\mathrm{st}}(\omega{\mathbf{P}^\top}\mathbf{y} + (\mathbf{I} - \omega{\mathbf{P}^\top}\mathbf{P}){\mathbf{h}^l};\mathbf{\gamma},\theta^l).
\end{equation}
We treat the pilot matrix $\mathbf{P}$ as the weight used within the decoder, so we can parameterize the autoencoder with the matrix $\mathbf{P}$ as the tied weight of both the encoder and decoder.

\subsection{Training Procedure}


\begin{figure}[t]
	\centerline{\includegraphics[scale=0.5]{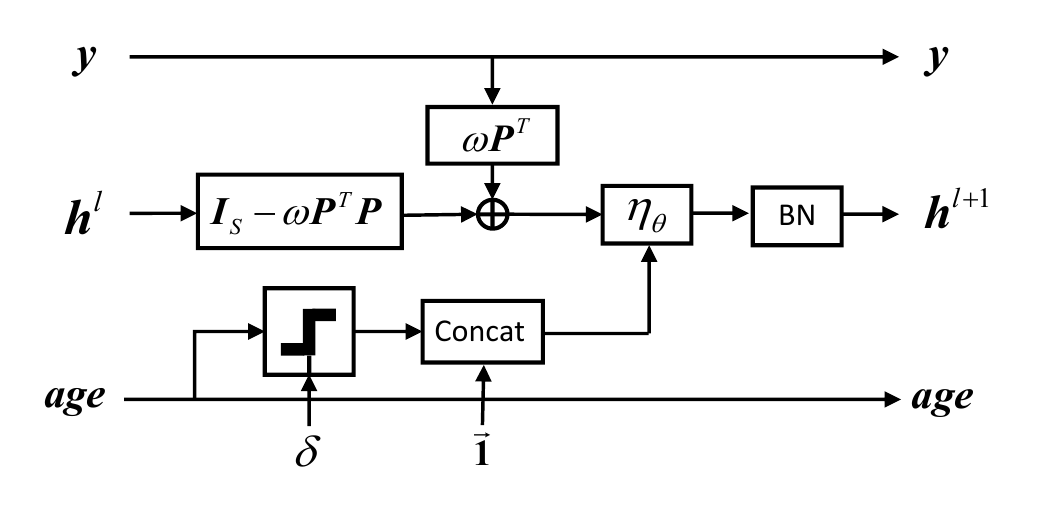}}
	\caption{An illustration of the $l$-th layer in LISTA-AGE.}
	\label{one_layer}
\end{figure}
We first generate the ground truth $\mathbf{h}_q^*$ in dataset, in which $q$ denotes the $q$-th data in the batch set. Then $\mathbf{h}_q^*$ is sent to the encoder which mimics the pilot transmission in wireless channel and $\mathbf{y}$ is got. The trainable parameters in encoder is the pilot matrix $\mathbf{P}$. Then, $\mathbf{y}$ is input into the LISTA-AGE decoder and the output is the estimated $\hat{\mathbf{h}}_q$. Since the decoder is parameterized by $\mathbf{P}$, the pilot matrix $\mathbf{P}$ is the trainable parameters in the A-PIAAE.
And it is noticed that the parameter $\omega$ in \eqref{decoder} closely relates to the corresponding pilot matrix $\mathbf{P}$, so $\omega$ must be trainable as $\mathbf{P}$ changes when the autoencoder begins training. Thus, the trainable parameters in $\Theta$ contain the matrix ${\mathbf{P}}$, $\omega$ and threshold $\left\{\theta^l\right\}_{l=1}^L$ of $L$ layers. The A-PIAAE learns the pilot matrix by using gradient-based optimization algorithm such as stochastic gradient descent to minimize the loss function defined by 
\begin{equation}\label{loss}
	\mathcal{L}(\Theta)=\sum_{q=1}^{Q} \left\| {{{\hat{\mathbf{h}}}_q} - \mathbf{h}_q^*} \right\|_2^2=\sum_{q=1}^{Q}\left\| {f(g(\mathbf{h}_q^*)) - \mathbf{h}_q^*} \right\|_2^2,
\end{equation}
where $Q$ denotes the number of training data in each batch set. The proposed A-PIAAE is trained via adaptive moment estimation (ADAM) \cite{kingma2014adam} to minimize the loss function defined in \eqref{loss}. The training procedures are given in Algorithm 2.

\section{Theoretical Analysis}
In the previous section, we design a specialized A-PIAAE and incorporate the prior information from the ARA scheme into the decoder. In CS problems, one of the primary challenges is determining the support corresponding to active devices. Knowledge of certain zero-element positions significantly reduces the complexity of identifying the support corresponding to active devices, thereby enhancing recovery performance. Rather than treating the NN as a black box, we further explore the performance gain of the proposed method in this section. Specifically, we analyze the convergence of the A-PIAAE and provide a theoretical proof of the performance gains achieved through the use of prior information.

In traditional CS algorithms, the matrices in iterations are predefined and require numerous iterations to achieve optimal performance. In contrast, LISTA learns those matrices from extensive training data and can find better matrices that lead to improved performance within fewer iterations. Several studies have proved that if the parameter matrices satisfy certain specific properties, LISTA can converge linearly with bounded errors. In this paper, we demonstrate that with the prior information used in the proposed A-PIAAE network, the constraints on the parameters are relaxed, resulting in a smaller error bound.

\newtheorem{definition}{Definition}
\begin{definition}[Basic definition]
For a noisy measurement $\mathbf{y} =\mathbf{P} \mathbf{h}^*+\mathbf{n}$, the signal $\mathbf{h}^*$ and the observation noise $\mathbf{n}$ are sampled from the following set:
	\begin{equation}
		\begin{aligned}
			\left( {{\mathbf{h}^*},\mathbf{n} } \right) &\in {\cal H}(B,s,\sigma )
			\\& \buildrel \Delta \over = \left\{ {\left( {{\mathbf{h}^*},\mathbf{n} } \right)\left| {\left| {\mathbf{h}_i^*} \right| \le B,\forall i,{{\left\| {{\mathbf{h}^*}} \right\|}_0} \le s,{{\left\| \mathbf{n}  \right\|}_1} \le \sigma } \right.} \right\}.
		\end{aligned}
	\end{equation}
In other words, $\mathbf{h}^*$ is bounded and s-sparse ($s \geq 2$), and $\mathbf{n}$ is bounded.
	
Let $\Lambda = \left\{ {i\left| {\mathbf{age}[i] \le \delta } \right.} \right\}$ which means when $i \in \Lambda$, $\mathbf{h}[i]=0$. $\Lambda$ is the set of zero element positions that we get due to the AoI prior information. 
\end{definition}

 \begin{algorithm} [!t] 
	\caption{The training procedure of A-PIAAE}  \label{algorithm procedure}
	\begin{algorithmic}[0]
		\State \textbf{Training Data:} pairs of inputs $\left\{\mathbf{h}_q,\mathbf{age}_q\right\}_{q=1}^{Q}$.
		\State \textbf{Initialization:} $\left\{\theta^l\right\}_{l=1}^L=0.1$.
		\State \textbf{Goal:} learn the parameters of A-PIAAE  $\Theta=\left\{\mathbf{P},\omega,\left\{\theta^l\right\}_{l=1}^L \right\}$.
		\For {each batch training pairs $\left\{\mathbf{h}_q,\mathbf{age}_q\right\}_{q=1}^{Q}$ }
		\State calculate the gradient of $\Theta$ according to the loss
		\State update $\Theta$ via ADAM
		\EndFor
	\end{algorithmic}
\end{algorithm}

\begin{definition}[Mutual Coherence]
Since that each column of the pilot matrix $\mathbf{P}$ is normalized for power normalization, the diagonal elements of $\mathbf{P}^{\top}\mathbf{P}$ are equal to 1. And the mutual coherence of $\mathbf{P} \in \Re^{m \times n}$, which is the same as the definition in \cite{NEURIPS2018}, is defined as:
	\begin{equation}
		\mu_1\triangleq\mu_1(\mathbf{P})=\max _{\substack{i \neq j \\ 1 \leq i, j \leq n}}\left|\left(\mathbf{P}_i\right)^{\top} \mathbf{P}_j\right|,
	\end{equation}
where $\mathbf{P}_i$ refers to the $i^{th}$ column of matrix $\mathbf{P}$.

We further define a new mutual coherence, in which we do not consider the position of those zero elements known by prior information :
\begin{equation}
    \mu_2\triangleq\mu_2(\mathbf{P})=\max _{\substack{i \neq j \\ i \notin 
    \Lambda}}\left|\left(\mathbf{P}_i\right)^{\top} \mathbf{P}_j\right|.
\end{equation}
\end{definition}
Since in $\mu_2$ some columns are not included in the calculation of the maximum value, we have $\mu_2 \leq \mu_1$.

\begin{thm}[A-PIAAE Convergence Guarantee]
	If definition 1 holds and $s$ is sufficiently small, then there exists sequence of parameters $\left\{\mathbf{P}, \theta^l\right\}_{l=0}^{\infty}$ of A-PIAAE that satisfy
	\begin{equation}
		C_P=\max _{1 \leq i, j \leq N}\left|\mathbf{P}_{i, j}\right|,
	\end{equation}
	\begin{equation}\label{theta}
		\theta^l=\sup _{\left(\mathbf{h}^*, \mathbf{n}\right) \in \mathcal{H}(B, s, \sigma)}\left\{{\mu}_2\left\|\mathbf{h}^l-\mathbf{h}^*\right\|_1\right\}+C_P \sigma,
	\end{equation}
	such that, for all $\left( {{\mathbf{h}^*},\mathbf{n} } \right) \in {\cal H}(B,s,\sigma )$, we have the error bound:
	\begin{equation}\label{conclusion}
		\left\|\mathbf{h}^l-\mathbf{h}^*\right\|_2 \leq s B \exp (-c l)+C \sigma, \quad \forall l=1,2, \cdots,
	\end{equation}
	where $c=-log(\mu_1s-\mu_1+\mu_2s)$ and $C=\frac{2 s C_P}{1-\mu_1s-\mu_2s+\mu_1}$.
\end{thm}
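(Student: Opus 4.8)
The plan is to mimic the convergence analysis of LISTA for noisy measurements (as in the cited \cite{NEURIPS2018}) but to carefully track where the prior information set $\Lambda$ enters, so that the weaker coherence $\mu_2$ replaces $\mu_1$ in the dominant term of the error recursion. First I would establish by induction a ``no false positives / support containment with sign consistency'' property: at every layer $l$, the iterate $\mathbf{h}^l$ has zero entries on $\Lambda$ (this is immediate from the age-aided thresholding operator, which hard-zeros those coordinates) and, more importantly, $\mathrm{supp}(\mathbf{h}^l)$ contains no indices outside $\mathrm{supp}(\mathbf{h}^*)$ — i.e. the thresholding kills spurious entries. The usual argument is that for an index $i \notin \mathrm{supp}(\mathbf{h}^*)$, the pre-thresholding value is $\bigl((\mathbf{I}-\omega\mathbf{P}^\top\mathbf{P})(\mathbf{h}^l-\mathbf{h}^*)\bigr)_i$ plus a noise term $\omega(\mathbf{P}^\top\mathbf{n})_i$; bounding the first by $\mu\|\mathbf{h}^l-\mathbf{h}^*\|_1$ and the second by $C_P\sigma$, the choice of $\theta^l$ in \eqref{theta} is exactly what guarantees this quantity is below threshold. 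The key point is that when $i \notin \mathrm{supp}(\mathbf{h}^*)$ and $i \notin \Lambda$, only columns indexed outside $\Lambda$ matter in the inner products $(\mathbf{P}_i)^\top \mathbf{P}_j$ with $j \in \mathrm{supp}(\mathbf{h}^l) \subseteq \mathrm{supp}(\mathbf{h}^*)$, so $\mu_2$ suffices there; and for $i \in \Lambda$ the coordinate is forcibly zeroed anyway, so no bound is needed.

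Next I would set up the one-step contraction. Writing $\mathbf{h}^{l+1}-\mathbf{h}^* = \eta_{\mathrm{st}}(\cdot) - \mathbf{h}^*$ and using the standard fact that soft-thresholding is a non-expansive shift by at most $\theta^l$ on the active set, I get a bound of the form $\|\mathbf{h}^{l+1}-\mathbf{h}^*\|_1 \le (\text{coherence factor})\cdot\|\mathbf{h}^l-\mathbf{h}^*\|_1 + (\text{noise floor})$, where the coherence factor comes from $\|(\mathbf{I}-\omega\mathbf{P}^\top\mathbf{P})_{\mathcal{S},\mathcal{S}}\|$-type estimates restricted to the true support $\mathcal{S}$ of size $\le s$. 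Here one has to be careful: on the true support $\mathcal{S}$ the relevant off-diagonal coherence is still $\mu_1$ in general (since $\mathcal{S}$ need not avoid $\Lambda$ — though active MDs do have AoI above threshold, so $\mathcal{S}\cap\Lambda=\emptyset$, which is worth using), but the ``leakage'' from the false-positive control step is governed by $\mu_2$. Collecting the two, the effective per-step multiplier on the $\ell_1$ error becomes $(\mu_1 s - \mu_1) + \mu_2 s = \mu_1(s-1)+\mu_2 s$, which under ``$s$ sufficiently small'' is $<1$; exponentiating gives the $\exp(-cl)$ factor with $c = -\log(\mu_1 s - \mu_1 + \mu_2 s)$. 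Summing the geometric series of noise terms yields the additive constant $C\sigma$ with $C = \tfrac{2sC_P}{1-\mu_1 s - \mu_2 s + \mu_1}$, and finally converting the $\ell_1$ bound to the $\ell_2$ bound using $\|\cdot\|_2 \le \|\cdot\|_1$ together with the initial estimate $\|\mathbf{h}^0-\mathbf{h}^*\|_1 \le sB$ (from $\mathbf{h}^0=\mathbf{0}$, $s$-sparsity and $|\mathbf{h}^*_i|\le B$) gives \eqref{conclusion}.

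The main obstacle I expect is the bookkeeping in the inductive step that simultaneously maintains (i) support containment $\mathrm{supp}(\mathbf{h}^l)\subseteq\mathcal{S}$, (ii) the $\ell_1$ error decay, and (iii) the precise place where $\mu_2$ rather than $\mu_1$ is legitimately used — in particular justifying that the columns appearing in every inner product that needs the \emph{smaller} constant genuinely have row-index outside $\Lambda$, which rests on $\mathrm{supp}(\mathbf{h}^l)\cup\mathrm{supp}(\mathbf{h}^*)$ being disjoint from $\Lambda$. A secondary subtlety is that $\theta^l$ as defined in \eqref{theta} involves $\sup$ over $\mathcal{H}(B,s,\sigma)$ of a quantity that itself depends on $\mathbf{h}^l$; I would handle this by defining $\theta^l$ recursively alongside the worst-case error $\varepsilon^l \triangleq \sup_{(\mathbf{h}^*,\mathbf{n})}\|\mathbf{h}^l-\mathbf{h}^*\|_1$, showing $\varepsilon^l$ obeys the scalar recursion $\varepsilon^{l+1}\le (\mu_1(s-1)+\mu_2 s)\varepsilon^l + 2sC_P\sigma$, and reading off both the closed form for $\theta^l$ and the bound \eqref{conclusion} from it. The comparison $\mu_2 \le \mu_1$ then makes transparent that the error bound here is no worse — and generically strictly better — than the one obtainable without the age prior.
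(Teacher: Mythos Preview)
Your proposal is correct and follows essentially the same argument as the paper: an inductive support-containment lemma (the paper's Lemma~1) that uses the age-aided operator to zero $\Lambda$-coordinates and the choice of $\theta^l$ with $\mu_2$ to kill the remaining off-support coordinates, followed by a one-step $\ell_1$ contraction on the true support using $\mu_1$ (the paper's Lemma~2), then taking the supremum over $\mathcal{H}(B,s,\sigma)$, substituting $\theta^l$, solving the geometric recursion $\varepsilon^{l+1}\le(\mu_1(s-1)+\mu_2 s)\varepsilon^l+2sC_P\sigma$, and passing to $\ell_2$ via $\|\cdot\|_2\le\|\cdot\|_1$ with $\varepsilon^0\le sB$. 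The split you anticipate---$\mu_1$ governing the on-support contraction, $\mu_2$ entering only through $\theta^l$---is exactly how the paper arrives at the contraction factor $\mu_1(s-1)+\mu_2 s$; your side observation that $S\cap\Lambda=\emptyset$ is correct but the paper does not exploit it to sharpen the on-support constant further.
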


Theorem 2 demonstrates the performance gain of the prior information in A-PIAAE. In order to satisfy the need to make the recovery error in \eqref{conclusion} decrease as the number of layers $l$ increases, we need to satisfy that $c > 0$. Then, we have $\mu_1s-\mu_1+\mu_2s<1$. We know from the prior information that the elements at some position in the vector must be zero, as we define the parameter $\mu_2$. Recall the parameters defined in \cite{NEURIPS2018}, $\theta^l=\sup _{\left(\mathbf{h}^*, \mathbf{n}\right) \in \mathcal{H}(B, s, \sigma)}\left\{{\mu}_1\left\|\mathbf{h}^l-\mathbf{h}^*\right\|_1\right\}+C_P \sigma$. Given that $\mu_2 \leq \mu_1$, the selection of the parameter $\theta^l$ in \eqref{theta} becomes more flexible, allowing a greater range of values of $\theta^l$ that can facilitate the convergence of the algorithm. Furthermore, the error bound of LISTA in \cite{NEURIPS2018} has the same form as in \eqref{conclusion}, with $c=-log(2\mu_1s-\mu_1)$ and $C=\frac{2 s C_P}{1-2\mu_1s+\mu_1}$. Since $c,C$ are smaller due to the condition $\mu_2 \leq \mu_1$, the error bound is reduced compared to LISTA, indicating that the incorporation of prior information enhances the performance of the algorithm. Furthermore, while the prior information is specifically tailored for MDs, it contributes to an overall enhancement in system performance. The non-orthogonality between the two device types enables the prior information utilized for MDs to also improve the AUD performance of ADs, thereby benefiting both device categories.

We further analyze the computational complexity of ISTA, LISTA and our proposed A-PIAAE. 
The computational complexity of one iteration of ISTA is $\mathcal{O}(S^2M)$. This cost arises primarily from the matrix multiplication $\mathbf{P}^\top\mathbf{P}$. In contrast, LISTA replaces the matrix multiplication with learnable parameter matrices $\mathbf{W}_1$ and $\mathbf{W}_2$. Consequently, each layer of LISTA incurs a lower complexity of $\mathcal{O}(S^2)$ due to the fixed learned weights. The computational process of A-PIAAE in the inference stage is identical to ISTA, therefore, its computational complexity of one iteration is also $\mathcal{O}(S^2M)$. 
However, A-PIAAE achieves comparable or better recovery performance within just $L=15$ layers, as opposed to over 1,000 ISTA iterations typically required for convergence. This translates into a significant reduction in inference time, making the proposed method well-suited for real-time implementation in grant-free mMTC systems. In addition, the network parameters in A-PIAAE are fixed post-training, eliminating any runtime overhead beyond matrix multiplications and thresholding operations.

\begin{figure}
	\centerline{\includegraphics[scale=0.55]{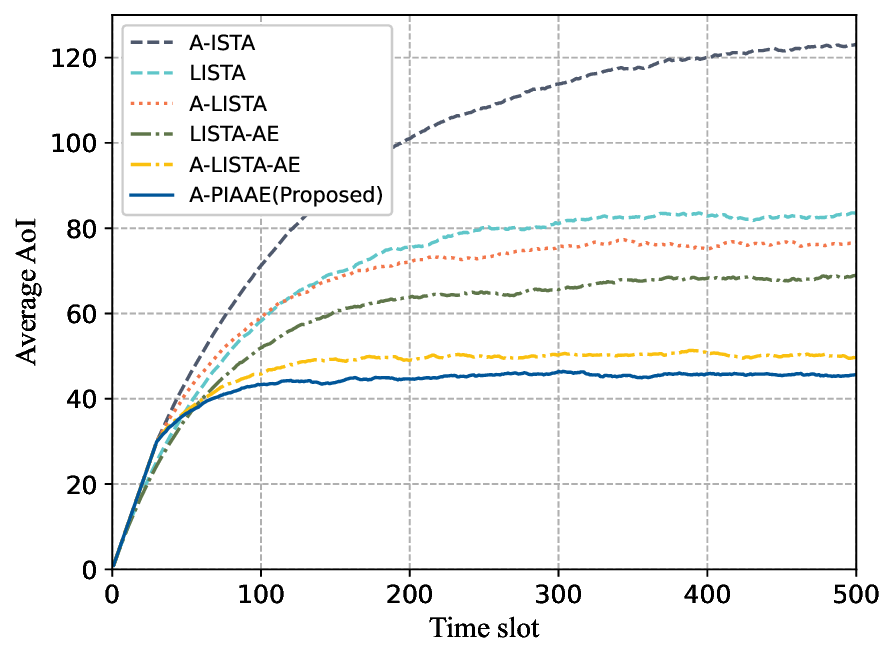}}
	\caption{Evolution of system average AoI over time. The number of ADs is $N=64$, the number of MDs is $K=128$, the length of pilot is $M=39$ and $\mathrm{SNR}=20\,  \mathrm{dB}$.}
	\label{aaoi_time}
\end{figure}

From the deployment perspective, the method scales efficiently with massive device numbers. For instance, when the total number of devices reaches tens of thousands, these devices can be partitioned across orthogonal resource blocks, and A-PIAAE can be independently applied to each block. This parallelizable structure facilitates distributed implementation across edge computing units or baseband processing modules at the BS.



\section{Experiment}
In this section, we construct various experiments to demonstrate the superiority of the optimized ARA scheme and the proposed A-PIAAE, which are designed to minimize the average AoI of MDs and the recovery error of ADs in the mMTC scenario. Firstly, we show the evolution of average AoI of MDs over time of different algorithms and demonstrate the improvement of A-PIAAE in minimizing average AoI of MDs. Then, we use A-PIAAE for channel estimation and AUD, and show the performance gain under different pilot lengths and SNRs. As we construct our network based on ISTA, the compared methods are elaborated as follows:

\begin{itemize}
	\item Age-based Random Access ISTA (A-ISTA): Using ARA scheme and doing AUD with ISTA which is the traditional mathematical solution.
	\item A-LISTA: Using ARA scheme and doing AUD with LISTA.
	\item A-LISTA with autoencoder (A-LISTA-AE): Using ARA scheme. An autoencoder is employed for AUD and pilot design, where the decoder uses LISTA.
	\item Random Access LISTA (LISTA): Using a random access scheme in which all devices can access the channel with probability $p$. Using LISTA for AUD.
	\item Random Access LISTA-AE (LISTA-AE): The access scheme is the same as that of LISTA. Using an autoencoder for AUD and pilot design. The decoder uses LISTA.
\end{itemize}

\begin{figure}
	\centerline{\includegraphics[scale=0.55]{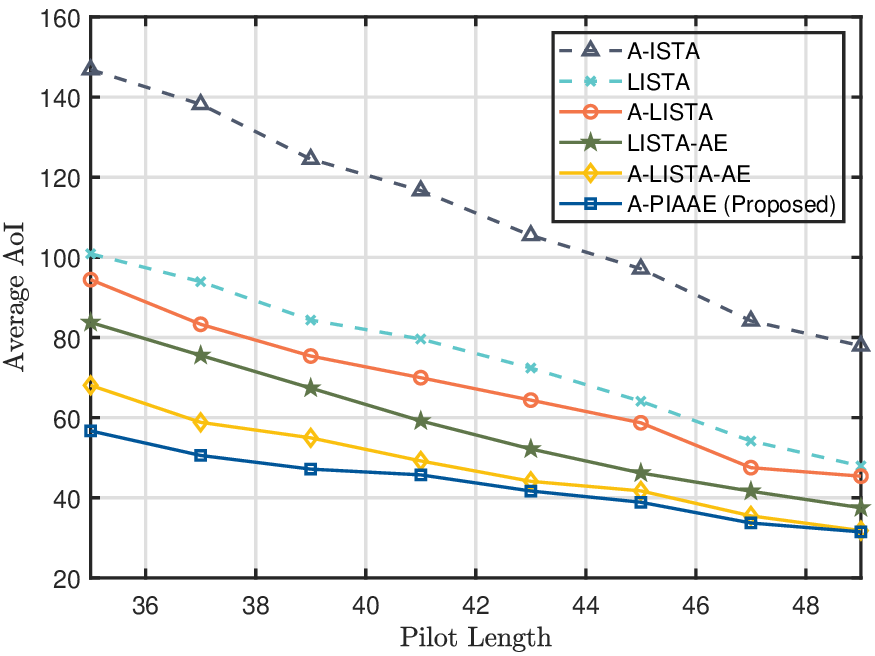}}
	\caption{Stationary system average AoI of different methods under different lengths of pilots. The number of ADs is $N=64$, the number of MDs is $K=128$ and $\mathrm{SNR}=20\,  \mathrm{dB}$.}
	\label{aaoi_pl}
\end{figure}
\textbf{Experiment Setting.} In the experiment, we set the number of ADs $N$ to 64 and the number of MDs $K$ to 128. The non-orthogonal pilot matrix $\mathbf{P}$ is generated from the Gaussian distribution and then normalized column by column. We assume the maximum AoI $a_{\max}$ of every MD is 100 for generating the data sets. Each entry in $\mathbf{g}_t$ is independently non-zero with probability $p$.
	Similarly, each entry in $\mathbf{x}_t$ has a probability of 0.05 to be non-zero. And we assume that the wireless channel is Rayleigh fading channel, so the non-zero values of $\mathbf{g}_t$ and $\mathbf{x}_t$ follow the Gaussian distribution. We apply the age-dependent random access scheme in OFDM systems, with a total of 1.4 MHz bandwidth and 72 subcarriers. For network training, we adopt online training with batch size 64 and adopt a stage-wise training method according to \cite{7934066}. The initial learning rate is 0.001 and is reduced by a certain percentage when the loss does not decrease for 500 iterations. The learning rate decay is applied a total of three times, with the learning rate successively scaled by factors of 0.5, 0.1, and 0.01. We set the layers of all NN to be 15, in which the networks have converged to a satisfactory level of performance, and set the iteration number of ISTA to be 1000 for comparison.  The maximum permissible error $\tau$ in \eqref{criteria} is 0.1 to simulate the evolution of AoI over time. To evaluate the AUD performance of ADs for different methods, we use successful detection rate as the evaluation index, which is defined as 
\begin{equation}
		r=\frac{\operatorname{card}(\mathcal{N}_t \bigcap \hat{\mathcal{N}}_t)}{N_t},
\end{equation}
where $\operatorname{card}()$ denotes the number of elements in the set.

\begin{table*}[t]
	\centering
	\caption{The successful detection rate of ADs of different algorithms under different pilot length [\%]. The numerical values in parentheses denote the optimal threshold $\delta$ and access probability $p$ of ARA schemes.}
	\label{pl_nmse_c}
	\resizebox{0.8\textwidth}{!}{
		\begin{tabular}{cccccccccc}
				\hline
				\toprule[1pt]
				\multicolumn{1}{c|}{Pilot Length} & \multicolumn{1}{c|}{35} & \multicolumn{1}{c|}{37} & \multicolumn{1}{c|}{39} & \multicolumn{1}{c|}{41} & \multicolumn{1}{c|}{43} & \multicolumn{1}{c|}{45} & \multicolumn{1}{c|}{47} & \multicolumn{1}{c}{49}\\ 
				\multicolumn{1}{c|}{($\delta$, $p$)} 
				& \multicolumn{1}{c|}{(43, 0.05)} & \multicolumn{1}{c|}{(43, 0.05)} 
				& \multicolumn{1}{c|}{(29, 0.05)} & \multicolumn{1}{c|}{(18, 0.05)} 
				& \multicolumn{1}{c|}{(18, 0.05)} & \multicolumn{1}{c|}{(11, 0.05)} 
				& \multicolumn{1}{c|}{(11, 0.06)} & \multicolumn{1}{c}{(11, 0.06)} \\
				\midrule
				\multicolumn{1}{c|}{A-ISTA}      & \multicolumn{1}{c|}{0.166}   & \multicolumn{1}{c|}{0.171}   & \multicolumn{1}{c|}{0.192}   & \multicolumn{1}{c|}{0.197}   & \multicolumn{1}{c|}{0.218}   & \multicolumn{1}{c|}{0.234}   &
				\multicolumn{1}{c|}{0.227}   &
				\multicolumn{1}{c}{0.242}   &\\
				\multicolumn{1}{c|}{LISTA}      & \multicolumn{1}{c|}{0.210}   & \multicolumn{1}{c|}{0.223}   & \multicolumn{1}{c|}{0.256}   & \multicolumn{1}{c|}{0.265}   & \multicolumn{1}{c|}{0.286}   & \multicolumn{1}{c|}{0.329}   &
				\multicolumn{1}{c|}{0.310}   &
				\multicolumn{1}{c}{0.349}   &\\
				\multicolumn{1}{c|}{LISTA-AE}         & \multicolumn{1}{c|}{0.260}   & \multicolumn{1}{c|}{0.283}   & \multicolumn{1}{c|}{0.322}   & \multicolumn{1}{c|}{0.365}   & \multicolumn{1}{c|}{0.412}   & \multicolumn{1}{c|}{0.461} &
				\multicolumn{1}{c|}{0.422} &
				\multicolumn{1}{c}{0.470} &
				\\
				\multicolumn{1}{c|}{A-LISTA}        & \multicolumn{1}{c|}{0.285}   & \multicolumn{1}{c|}{0.332}   & \multicolumn{1}{c|}{0.322}   & \multicolumn{1}{c|}{0.331}   & \multicolumn{1}{c|}{0.363}   & \multicolumn{1}{c|}{0.365}   &
				\multicolumn{1}{c|}{0.369}   &
				\multicolumn{1}{c}{0.393}   &\\
				\multicolumn{1}{c|}{A-LISTA-AE}      & \multicolumn{1}{c|}{0.386}   & \multicolumn{1}{c|}{0.471}   & \multicolumn{1}{c|}{0.439}   & \multicolumn{1}{c|}{0.457}   & \multicolumn{1}{c|}{0.532}   & \multicolumn{1}{c|}{0.521}  &
				\multicolumn{1}{c|}{0.522}  &
				\multicolumn{1}{c}{0.577}  &\\
				\multicolumn{1}{c|}{\textbf{A-PIAAE (Proposed)}}    & \multicolumn{1}{c|}{\textbf{0.479}}   & \multicolumn{1}{c|}{\textbf{0.568}}   & \multicolumn{1}{c|}{\textbf{0.536}}   & \multicolumn{1}{c|}{\textbf{0.501}}   & \multicolumn{1}{c|}{\textbf{0.560}}   & \multicolumn{1}{c|}{\textbf{0.569}}   &
				\multicolumn{1}{c|}{\textbf{0.550}}   &
				\multicolumn{1}{c}{\textbf{0.588}}   &
				\\
				\bottomrule[1pt]
		\end{tabular}
		}
\end{table*}

\begin{figure}[t]
	\centering
	\begin{minipage}[b]{1\linewidth}
		\centering
		\subfigure[AUD performance of ADs]{
			\includegraphics[width=0.9\linewidth]{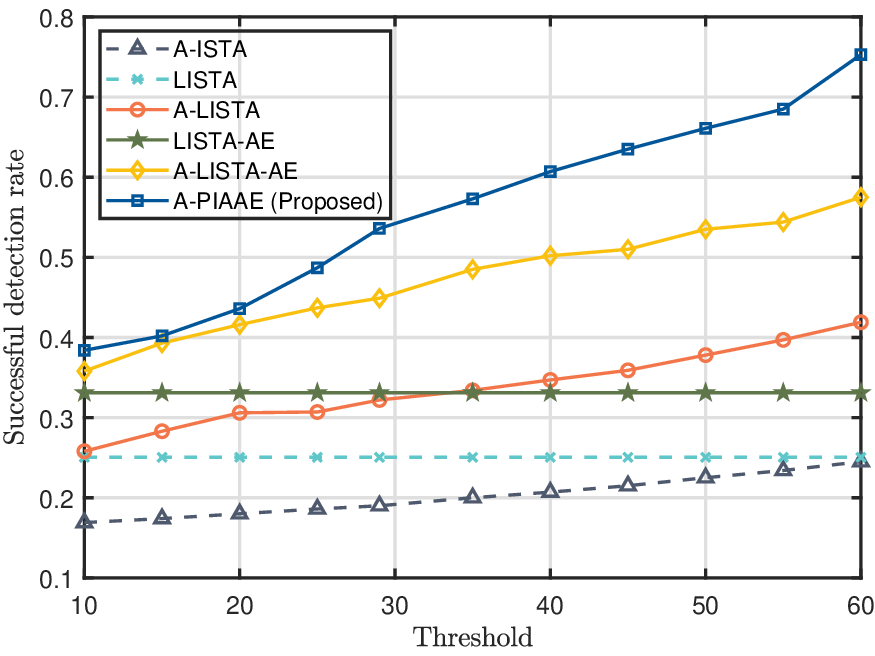}
		}
		\subfigure[Average AoI of MDs]{
			\includegraphics[width=0.9\linewidth]{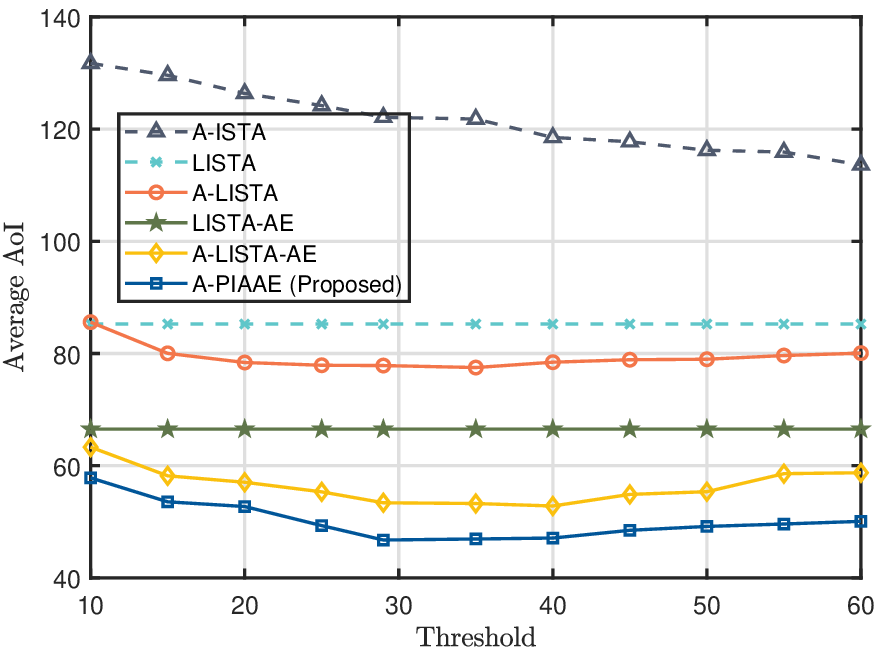}
		}
	\end{minipage}
	\caption{Successful detection rate of ADs and stationary average AoI of MDs of different methods under different threshold $\delta$. The number of ADs is $N=64$, the number of MDs is $K=128$, the length of pilot is $M=39$ and $\mathrm{SNR}=20$.}
	\label{threshold}
\end{figure}

First, we compare the evolution of the average AoI of MDs over time for different algorithms. As shown in Fig. \ref{aaoi_time}, the average AoI for all four ARA algorithms initially increases linearly, which means AoI of all MDs are below the threshold and all MDs are inactive. As time progresses, all six algorithms reach a stationary state, with the proposed algorithm A-PIAAE achieving the lowest average AoI. The effectiveness of the proposed ARA scheme, prior information aided NN and the design of autoencoder can be validated from the comparison of the algorithms. In the next experiments, we use the average AoI of the stationary state to compare all algorithms.

In Table \ref{pl_nmse_c}, we show the optimal AoI threshold $\delta$, active probability $p$, which are solved by the method introduced in Section III.A, and the successful detection rate of ADs under different lengths of pilots. As AUD performance can be improved with more pilot resources, the limitation of the access threshold $\delta$ decreases, which means more devices with smaller AoI can access the channel as the pilot length increases. It is important to note that the performance of the AUD does not improve linearly with increasing pilot length. This non-linear relationship arises because changes in pilot length affect access parameters, which in turn alters signal sparsity. Nevertheless, A-PIAAE always has the highest successful detection rate.

As shown in Fig. \ref{aaoi_pl}, we compare the average AoI of MDs under different pilot lengths. The average AoI of all six algorithms decreases as the pilot length increases, which means that more resources can guarantee recovery accuracy and reduce the average AoI of MDs. The proposed A-PIAAE has the lowest average AoI, and the gain is more pronounced with shorter pilot length. This phenomenon occurs because with shorter pilot length, the threshold of age-based access scheme is larger and more devices are denied to access the channel, giving A-PIAAE more prior information and leads to greater performance gain. 

\begin{figure}[t]
	\centering
	\begin{minipage}[b]{1\linewidth}
		\centering
		\subfigure[AUD performance of ADs.]{
			\includegraphics[width=0.9\linewidth]{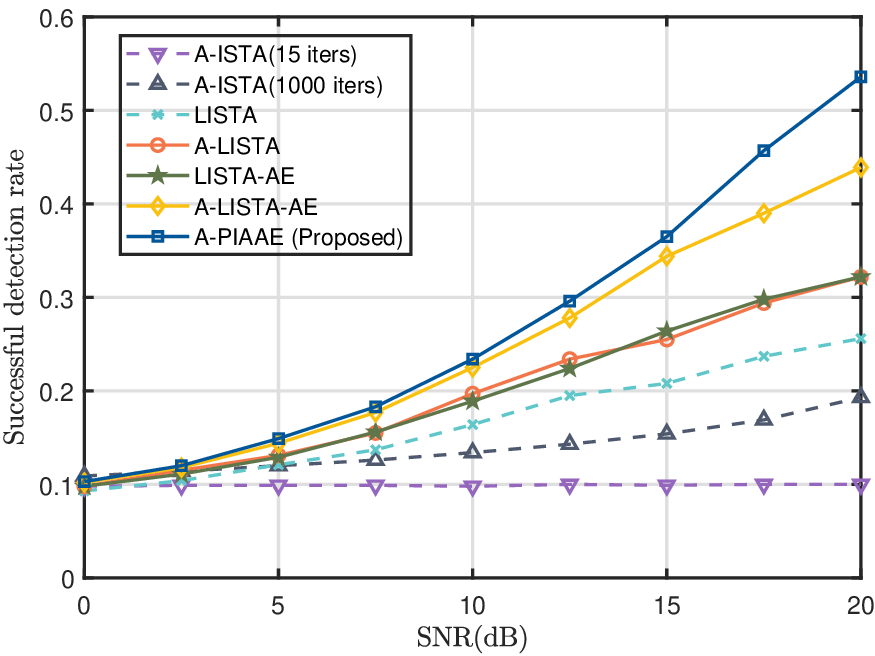}
		}
		\subfigure[Average AoI of MDs.]{
			\includegraphics[width=0.9\linewidth]{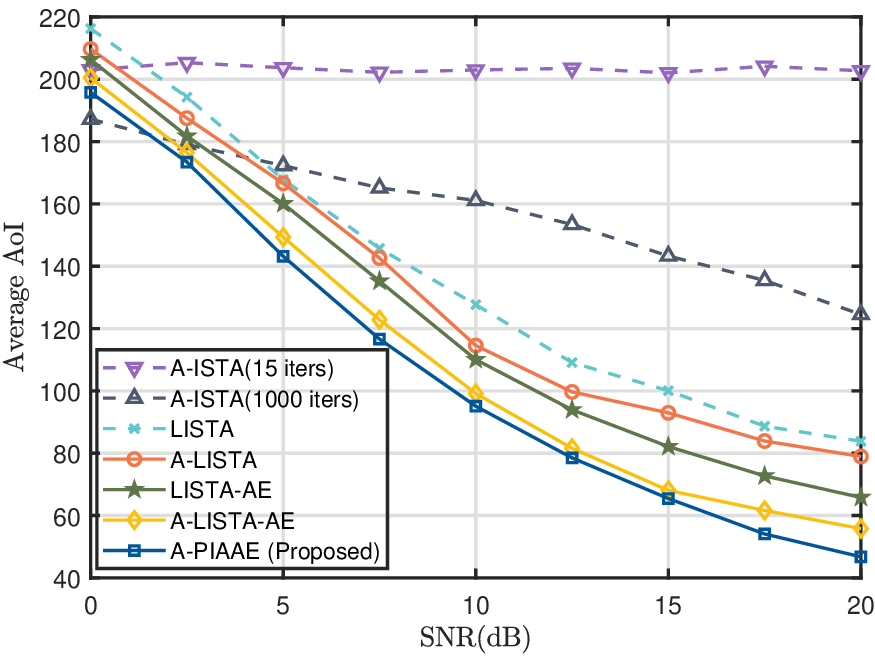}
		}
	\end{minipage}
	\caption{Successful detection rate of ADs and stationary average AoI of MDs of different methods under different SNRs. The number of ADs is $N=64$, the number of MDs is $K=128$, the length of pilot is $M=39$.}
	\label{snr}
\end{figure}

\begin{figure}[t]
	\centering
	\begin{minipage}[b]{1\linewidth}
		\centering
		\subfigure[Average AoI of MDs with the ratio of MDs:ADs is 2:1.]{
			\includegraphics[width=0.9\linewidth]{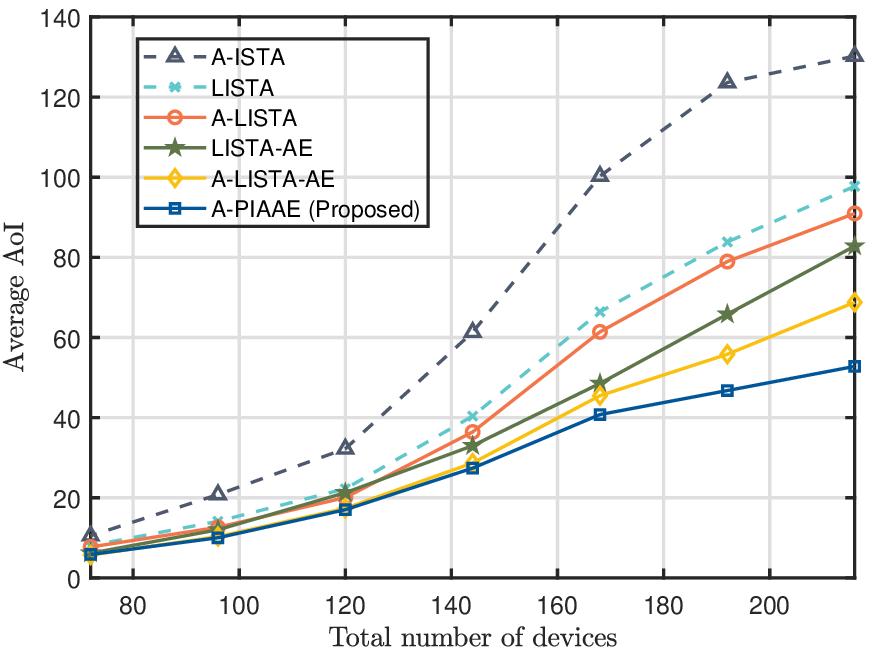}
		}
		\subfigure[Average AoI of MDs with the ratio of MDs:ADs is 1:1.]{
			\includegraphics[width=0.9\linewidth]{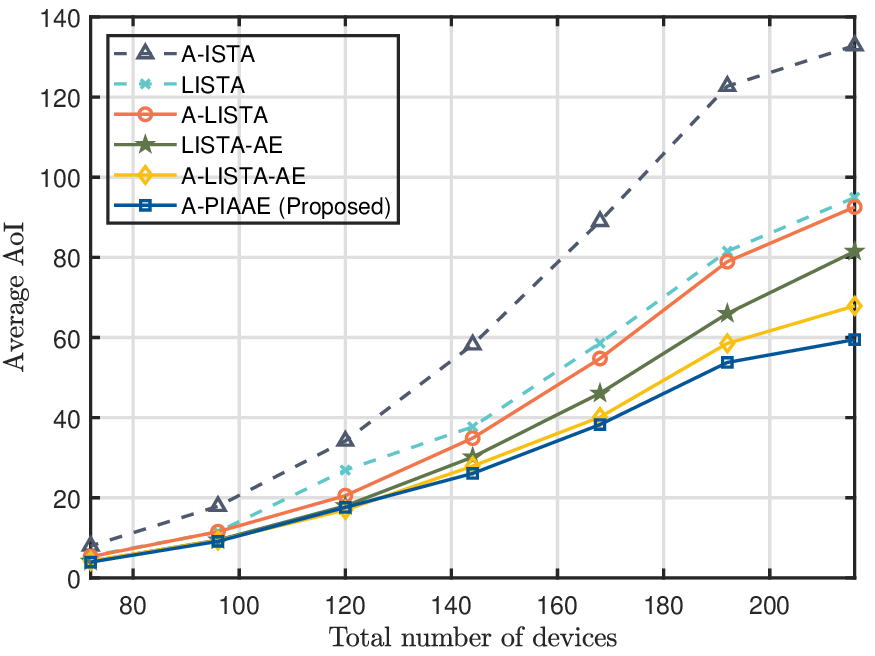}
		}
	\end{minipage}
	\caption{The stationary average AoI of MDs of different methods under different total number of devices. The length of pilot is $M=39$ and $\mathrm{SNR}=20$.}
	\label{totalnum}
\end{figure}

Fig. \ref{threshold} shows the trend of the successful detection rate of ADs and the average AoI of MDs with the change of threshold $\delta$. The AUD performance and the average AoI of two RA schemes are not affected by the change of the threshold. And the successful detection rate of ADs of the other ARA schemes (except A-ISTA, which is because the experiment with 1000 cycles of ISTA does not achieve convergence performance) increases as the access threshold becomes larger, while there exists a minimum value for the average AoI of MDs, which coincides with our theoretical analyses in section \uppercase\expandafter{\romannumeral3}. We can see that the optimal threshold obtained from theoretical analyses is slightly smaller than the optimal threshold in the actual case of using DL methods. The reason for the deviation is due to the difference in the successful detection rate obtained using DL-based methods and traditional CS theory. Notably, our A-PIAAE algorithm is optimal in terms of both AUD performance for ADs and minimizing average AoI for MDs.

In the experiments on the AUD performance of ADs and average AoI of MDs under varying SNRs, we further add A-ISTA with 15 iterations, which is the same as the network layers. As shown in Fig. \ref{snr}, when SNR is higher than 2.5, the performance of the proposed A-PIAAE is optimal; when SNR is lower than 2.5, the performance of the proposed algorithm is lower than that of ISTA with 1000 iterations but still better than that of ISTA with the same 15 iterations. However, better performance can still be achieved by increasing the number of network layers and using a larger training set.

Fig. \ref{totalnum} illustrates the average AoI performance under different total device numbers when MDs:ADs is 2:1 and 1:1. As can be seen from the figure, no matter how the ratio of the two types of devices changes, the overall trend of the average AoI changing with the total number of devices is the same, and the proposed A-PIAAE has the best performance. As the total number of devices increases, the optimal access parameters impose greater restrictions on the access of MD devices, and the access threshold $\delta$ becomes larger, which means the proposed A-PIAAE can use more prior information, so the performance improvement is also greater.



\section{Conclusion}
In this paper, we have investigated the random access problem for GR-RA with heterogeneous traffic. We have analyzed the ARA scheme, derived the expression for the average AoI and optimized the access parameters. Then, we have designed an autoencoder A-PIAAE to jointly optimize the pilots and learn the detection model, which can reduce interference between pilots and improve access performance for both types of devices. Utilizing the AoI of MDs as prior information, we have designed the decoder called LISTA-AGE to further enhance the recovery performance. Furthermore, we have theoretically analyzed the convergence of the proposed A-PIAAE. According to our experimental results, the proposed method outperformed the traditional method and existing approaches in both accuracy and information freshness.
\appendix
\section*{Proof of The Theorem 2}
The proof is organized in three parts: First, we proof that the value of $\mathbf{h}^l$ obtained through A-PIAAE must necessarily be zero at the positions corresponding to the zero elements in the ground truth $\mathbf{h}^*$. Then, we derive the recovery error for one data in $\mathcal{H}(B, s, \sigma)$. Last, we get the error bound for the whole data set.
\newtheorem{lemma}{Lemma}
	\begin{lemma}
		We take $\left(\mathbf{h}^*, \mathbf{n}\right) \in \mathcal{H}(B, s, \sigma)$ and let $S=\mathrm{support}\left(\mathbf{h}^*\right)$. If \eqref{theta} holds, then we have $\mathbf{h}_i^l=0, \forall i \notin S, \forall l$
\end{lemma}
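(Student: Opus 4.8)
The plan is to prove Lemma~1 by induction on the layer index $l$, establishing the ``no false positive'' property that each iterate $\mathbf{h}^l$ of A-PIAAE is supported inside $S=\mathrm{support}(\mathbf{h}^*)$. The base case $l=0$ is immediate from the standard zero initialization $\mathbf{h}^0=\mathbf{0}$. For the inductive step I would assume $\mathbf{h}^l_i=0$ for every $i\notin S$, fix $i\notin S$, and analyze the pre-threshold argument of layer $l$ in \eqref{decoder}, namely $v_i:=\bigl(\omega\mathbf{P}^\top\mathbf{y}+(\mathbf{I}-\omega\mathbf{P}^\top\mathbf{P})\mathbf{h}^l\bigr)_i$, taking the idealized step size $\omega=1$ as in \cite{NEURIPS2018} (consistent with the parameter set $\{\mathbf{P},\theta^l\}$ of Theorem~2).

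Substituting the measurement model $\mathbf{y}=\mathbf{P}\mathbf{h}^*+\mathbf{n}$ gives $v_i=\mathbf{h}^l_i+\bigl(\mathbf{P}^\top\mathbf{P}(\mathbf{h}^*-\mathbf{h}^l)\bigr)_i+(\mathbf{P}^\top\mathbf{n})_i$. Because the columns of $\mathbf{P}$ are power-normalized, $(\mathbf{P}^\top\mathbf{P})_{ii}=\mathbf{P}_i^\top\mathbf{P}_i=1$, so the diagonal term cancels $\mathbf{h}^l_i$, leaving $v_i=\mathbf{h}^*_i+\sum_{j\neq i}(\mathbf{P}_i^\top\mathbf{P}_j)(\mathbf{h}^*_j-\mathbf{h}^l_j)+(\mathbf{P}^\top\mathbf{n})_i$. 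Since $i\notin S$ forces $\mathbf{h}^*_i=0$, and the induction hypothesis together with the definition of $S$ forces $\mathbf{h}^*_j-\mathbf{h}^l_j=0$ for every $j\notin S$, the sum collapses to $j\in S$ (for which $j\neq i$ is automatic): $v_i=\sum_{j\in S}(\mathbf{P}_i^\top\mathbf{P}_j)(\mathbf{h}^*_j-\mathbf{h}^l_j)+(\mathbf{P}^\top\mathbf{n})_i$.

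It then remains to show $\mathbf{h}^{l+1}_i=\eta_{\mathrm{st}}(v_i;\mathbf{\gamma},\theta^l)=0$, which I would do by splitting on membership in $\Lambda=\{i:\mathbf{age}[i]\le\delta\}$. First note that $i\notin S$ holds for every $i\in\Lambda$, since under the ARA scheme an MD with AoI at most $\delta$ is inactive and therefore has $\mathbf{h}^*_i=0$; thus the cases ``$i\in\Lambda$'' and ``$i\notin\Lambda,\ i\notin S$'' are jointly exhaustive over $i\notin S$. If $i\in\Lambda$, then $\gamma[i]=0$ and the age-aided thresholding operator returns $0$ by definition. If $i\notin\Lambda$, then $|\mathbf{P}_i^\top\mathbf{P}_j|\le\mu_2$ for every $j\neq i$ by the definition of the restricted mutual coherence, and $|(\mathbf{P}^\top\mathbf{n})_i|=|\mathbf{P}_i^\top\mathbf{n}|\le C_P\sum_m|\mathbf{n}_m|\le C_P\sigma$; hence $|v_i|\le\mu_2\sum_{j\in S}|\mathbf{h}^*_j-\mathbf{h}^l_j|+C_P\sigma\le\mu_2\|\mathbf{h}^l-\mathbf{h}^*\|_1+C_P\sigma\le\theta^l$, the last step being \eqref{theta} together with the fact that the supremum there dominates the value at the particular pair $(\mathbf{h}^*,\mathbf{n})$ in hand. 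In both cases soft-thresholding returns $0$, so $\mathbf{h}^{l+1}_i=0$, which closes the induction.

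The computation is elementary once the decomposition of $v_i$ is written out; the step requiring care is the bookkeeping around $\Lambda$. One must verify $\Lambda\subseteq\{i:\mathbf{h}^*_i=0\}$ so that the two cases genuinely exhaust $i\notin S$, and recognize that the relaxed coherence $\mu_2$ — whose maximum runs only over row indices outside $\Lambda$ — is exactly what the estimate needs, because the inequality $|v_i|\le\theta^l$ is invoked only for rows outside $\Lambda$, those inside $\Lambda$ being zeroed directly by the indicator $\mathbf{\gamma}$. This is precisely the mechanism by which the AoI prior relaxes the admissible choice of $\theta^l$ relative to the $\mu_1$-based threshold in \cite{NEURIPS2018}, and it is this lemma that the subsequent per-instance error analysis and the final bound \eqref{conclusion} will build upon.
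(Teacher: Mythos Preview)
Your proposal is correct and follows essentially the same route as the paper's own proof: induction on the layer index with base case $\mathbf{h}^0=\mathbf{0}$, a case split over $i\in\Lambda$ (zeroed directly by the age-aided thresholding) versus $i\notin\Lambda$, and for the latter the bound $|v_i|\le\mu_2\|\mathbf{h}^l-\mathbf{h}^*\|_1+C_P\sigma\le\theta^l$ via the restricted mutual coherence. Your write-up is in fact slightly more careful than the paper's, since you explicitly verify $\Lambda\subseteq S^c$ so that the two cases exhaust $i\notin S$, and you make the step size $\omega=1$ explicit.
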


\begin{proof}
	When $i \in \Lambda$, due to the access scheme in which the elements of corresponding age below the threshold are definitely zero, we have $\mathbf{h}_i^l=0$.
	
	When $i \notin \Lambda$ and $i \notin S$, we prove this by induction. When $l=0$, it is satisfied since $\mathbf{h}^0=0$. Fixing $l$, and assuming $\mathbf{h}_i^l=0, \forall i \notin S, i \notin \Lambda$, we have 
	\begin{equation}
		\begin{aligned}
			\mathbf{h}_i^{l+1} & =\eta_{\theta^l}\left(\mathbf{h}_i^l-\sum_{j \in S}\left(\mathbf{P}_i\right)^\top\left(\mathbf{P} \mathbf{h}^l-\mathbf{y}\right)\right) \\
			& =\eta_{\theta^l}\left(-\sum_{j \in S}\left(\mathbf{P}_i\right)^\top \mathbf{P}_j\left(\mathbf{h}_j^l-\mathbf{h}_j^*\right)+\left(\mathbf{P}_i\right)^\top \mathbf{n}\right).
		\end{aligned}
	\end{equation}
	Since $\theta^l=\sup _{\left(\mathbf{h}^*, \mathbf{n}\right) \in \mathcal{H}(B, s, \sigma)}\left\{{\mu}_2\left\|\mathbf{h}^l-\mathbf{h}^*\right\|_1\right\}+C_P \sigma$
	\begin{equation}
		\begin{aligned}
			\theta^l  & \geq {\mu}_2\left\|\mathbf{h}^l-\mathbf{h}^*\right\|_1+C_P\|\mathbf{n}\|_1 \\
			& \geq\left|-\sum_{j \in S}\left(\mathbf{P}_i\right)^\top \mathbf{P}_j\left(\mathbf{h}_j^l-\mathbf{h}_j^*\right)+\left(\mathbf{P}_i\right)^\top \mathbf{n}\right|, \forall i \notin S,
		\end{aligned}
	\end{equation}
	which, implies $\mathbf{h}_i^{l+1}=0, \forall i \notin S, i \notin \Lambda$ by the definition of the soft-thresholding function $ \eta_{\theta^l}$. By induction, we have $\mathbf{h}_i^l=0, \forall i \notin S,i \notin \Lambda, \forall l$.
	
	In summary, we have
	\begin{equation}\label{step1}
		\mathbf{h}_i^l=0, \forall i \notin S, \forall l.
	\end{equation}
\end{proof}

\begin{lemma}
		For one $\left(\mathbf{h}^*, \mathbf{n}\right)\in \mathcal{H}(B, s, \sigma)$, we have 
		\begin{equation}
			\left\|\mathbf{h}^{l+1}-\mathbf{h}^*\right\|_1 \leq {\mu}_1(|S|-1)\left\|\mathbf{h}^l-\mathbf{h}^*\right\|_1+\theta^l|S|+|S| C_P \sigma,
		\end{equation}
		where $|S|$ denotes the number of elements in set $S$
\end{lemma}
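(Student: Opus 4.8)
The plan is to localize the entire estimate to the support $S=\mathrm{support}(\mathbf{h}^*)$, using Lemma~1 to annihilate every off-support coordinate. Applying Lemma~1 at iteration index $l+1$ gives $\mathbf{h}_i^{l+1}=0$ for all $i\notin S$, and since $\mathbf{h}^*$ is $S$-supported we get $\|\mathbf{h}^{l+1}-\mathbf{h}^*\|_1=\sum_{i\in S}|\mathbf{h}_i^{l+1}-\mathbf{h}_i^*|$; likewise $\|\mathbf{h}^l-\mathbf{h}^*\|_1=\sum_{j\in S}|\mathbf{h}_j^l-\mathbf{h}_j^*|$. Hence it is enough to bound $|\mathbf{h}_i^{l+1}-\mathbf{h}_i^*|$ for a single $i\in S$ and then add these up.

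For such an $i$, I would start from the decoder update written in the form used in the proof of Lemma~1, substitute $\mathbf{y}=\mathbf{P}\mathbf{h}^*+\mathbf{n}$, and use that both $\mathbf{h}^l$ and $\mathbf{h}^*$ live on $S$ to obtain
\[
\mathbf{h}_i^{l+1}=\eta_{\theta^l}\!\Bigl(\mathbf{h}_i^l-\sum_{j\in S}(\mathbf{P}_i)^\top\mathbf{P}_j(\mathbf{h}_j^l-\mathbf{h}_j^*)+(\mathbf{P}_i)^\top\mathbf{n}\Bigr).
\]
Since $\mathbf{P}$ has unit-norm columns, the $j=i$ term equals $\mathbf{h}_i^l-\mathbf{h}_i^*$ and cancels the leading $\mathbf{h}_i^l$, so the pre-threshold argument reduces to $\mathbf{h}_i^*-\sum_{j\in S,\,j\neq i}(\mathbf{P}_i)^\top\mathbf{P}_j(\mathbf{h}_j^l-\mathbf{h}_j^*)+(\mathbf{P}_i)^\top\mathbf{n}$; note also that $i\in S$ forces $\gamma[i]=1$ (an active MD satisfies $\mathbf{age}[i]>\delta$ under the ARA scheme, and every AD carries $\gamma[i]=1$ by construction), so the age-aided operator acts here as the ordinary soft-thresholding with parameter $\theta^l$. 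Using the elementary shrinkage bound $|\eta_{\theta^l}(x)-x|\le\theta^l$ together with the triangle inequality, then $|(\mathbf{P}_i)^\top\mathbf{P}_j|\le\mu_1$ for $j\neq i$ and $|(\mathbf{P}_i)^\top\mathbf{n}|\le\|\mathbf{P}_i\|_\infty\|\mathbf{n}\|_1\le C_P\sigma$ (by $(\mathbf{h}^*,\mathbf{n})\in\mathcal{H}(B,s,\sigma)$), yields
\[
|\mathbf{h}_i^{l+1}-\mathbf{h}_i^*|\le\theta^l+\mu_1\!\!\sum_{j\in S,\,j\neq i}\!\!|\mathbf{h}_j^l-\mathbf{h}_j^*|+C_P\sigma .
\]

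Summing this over the $|S|$ indices $i\in S$ and interchanging the order of summation in the double sum $\sum_{i\in S}\sum_{j\in S,\,j\neq i}|\mathbf{h}_j^l-\mathbf{h}_j^*|$ — in which each $j\in S$ is counted exactly $|S|-1$ times — produces exactly $\mu_1(|S|-1)\|\mathbf{h}^l-\mathbf{h}^*\|_1+\theta^l|S|+|S|C_P\sigma$, which is the claim. The one place requiring care is precisely this bookkeeping that yields the coefficient $|S|-1$ rather than $|S|$: one must first isolate the diagonal term $j=i$ via the column normalization $(\mathbf{P}_i)^\top\mathbf{P}_i=1$ and only afterwards bound the remaining strictly off-diagonal inner products by the mutual coherence before swapping the sums. (Since $i\in S$ also implies $i\notin\Lambda$, one could in fact replace $\mu_1$ by $\mu_2$ in this estimate, which is the origin of the tighter constants in Theorem~2; we keep $\mu_1$ to match the stated lemma.) A minor secondary check is the verification that $\gamma[i]=1$ on the support, so that the age-aided threshold genuinely collapses to the standard soft-threshold there.
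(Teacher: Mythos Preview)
Your proof is correct and follows essentially the same route as the paper: localize to $S$ via Lemma~1, use the column normalization $(\mathbf{P}_i)^\top\mathbf{P}_i=1$ to strip off the diagonal term, bound the remaining off-diagonal inner products by $\mu_1$, and sum over $i\in S$ with the $|S|-1$ bookkeeping. The only cosmetic difference is that the paper phrases the soft-threshold step through the subgradient inclusion $\eta_{\theta^l}(x)\in x-\theta^l\partial\ell_1(\eta_{\theta^l}(x))$ together with $|\partial\ell_1|\le 1$, whereas you invoke the equivalent shrinkage bound $|\eta_{\theta^l}(x)-x|\le\theta^l$ directly; the two are identical in content.
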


\begin{proof}
	According to Lemma 1, we only need to consider the recovery error of the elements on $S$. For all $i \in S$,
	\begin{equation}
		\begin{aligned}
			\mathbf{h}_i^{l+1} & =\eta_{\theta^l}\left(\mathbf{h}_i^l-\left(\mathbf{P}_i\right)^\top \mathbf{P}_S\left(\mathbf{h}_S^l-\mathbf{h}_S^*\right)+\left(\mathbf{P}_i\right)^\top \mathbf{n}\right) \\
			& \in \mathbf{h}_i^l-\left(\mathbf{P}_i\right)^T \mathbf{P}_S\left(\mathbf{h}_S^l-\mathbf{h}_S^*\right)+\left(\mathbf{P}_i\right)^\top \mathbf{n}-\theta^l \partial \ell_1\left(\mathbf{h}_i^{l+1}\right),
		\end{aligned}
	\end{equation}
	where $\partial \ell_1\left(\mathbf{h}\right)$ is the sub-gradient of $\|\mathbf{h}\|_1$ and defined component-wisely as
	\begin{equation}\label{sub-gra}
		\partial \ell_1(\mathbf{h})_i= \begin{cases}\left\{\operatorname{sign}\left(\mathbf{h}_i\right)\right\} & \text { if } \mathbf{h}_i \neq 0 \\ {[-1,1]} & \text { if } \mathbf{h}_i=0\end{cases}.
	\end{equation}
	Since $\mathbf{P}_i^\top\mathbf{P}_i=1$, we have
	\begin{equation}
		\begin{aligned}
			&\mathbf{h}_i^l-\left(\mathbf{P}_i\right)^\top \mathbf{P}_S\left(\mathbf{h}_S^l-\mathbf{h}_S^*\right)\\  =&\mathbf{h}_i^l-\sum_{j \in S, j \neq i}\left(\mathbf{P}_i\right)^\top \mathbf{P}_j\left(\mathbf{h}_j^l-\mathbf{h}_j^*\right)-\left(\mathbf{h}_i^l-\mathbf{h}_i^*\right) \\
			=&\mathbf{h}_i^*-\sum_{j \in S, j \neq i}\left(\mathbf{P}_i\right)^\top \mathbf{P}_j\left(\mathbf{h}_j^l-\mathbf{h}_j^*\right) .
		\end{aligned}
	\end{equation}
	Then for $\forall i \in S$,
	\begin{equation}
		\begin{aligned}
			\mathbf{h}_i^{l+1}-\mathbf{h}_i^* \in&-\sum_{j \in S, j \neq i}\left(\mathbf{P}_i\right)^\top \mathbf{P}_j\left(\mathbf{h}_j^l-\mathbf{h}_j^*\right)\\&+\left(\mathbf{P}_i\right)^\top \mathbf{n}-\theta^l \partial \ell_1\left(\mathbf{h}_i^{l+1}\right).
		\end{aligned}
	\end{equation}
	By the definition \eqref{sub-gra}, every element in $\partial \ell_1(\mathbf{h}), \forall \mathbf{h} \in \Re$ has a magnitude less than or equal to 1. Thus, for $\forall i \in S$,
	\begin{equation}
		\begin{aligned}
			&\left|\mathbf{h}_i^{l+1}-\mathbf{h}_i^*\right| \\ \leq& \sum_{j \in S, j \neq i}\left|\left(\mathbf{P}_i\right)^\top \mathbf{P}_j\right|\left|\mathbf{h}_j^l-\mathbf{h}_j^*\right|+\theta^l+\left|\left(\mathbf{P}_i\right)^\top \mathbf{n}\right| \\
			\leq& \mu_1 \sum_{j \in S, j \neq i}\left|\mathbf{h}_j^l-\mathbf{h}_j^*\right|+\theta^l+C_P\|\mathbf{n}\|_1.
		\end{aligned}
	\end{equation}
	The conclusion \eqref{step1} of step 1 implies $\left\|\mathbf{h}^l-\mathbf{h}^*\right\|_1=\left\|\mathbf{h}_S^l-\mathbf{h}_S^*\right\|_1$ for all $l$. Then
	\begin{equation}
		\begin{aligned}
			&\left\|\mathbf{h}^{l+1}-\mathbf{h}^*\right\|_1= \sum_{i \in S}\left|\mathbf{h}_i^{l+1}-\mathbf{h}_i^*\right| \\
			& \leq \sum_{i \in S}\left({\mu}_1 \sum_{j \in S, j \neq i}\left|\mathbf{h}_j^l-\mathbf{h}_j^*\right|+\theta^l+C_P \sigma\right) \\
			& ={\mu}_1(|S|-1) \sum_{i \in S}\left|\mathbf{h}_i^l-\mathbf{h}_i^*\right|+\theta^l|S|+|S| C_P \sigma \\
			& \leq {\mu}_1(|S|-1)\left\|\mathbf{h}^l-\mathbf{h}^*\right\|_1+\theta^l|S|+|S| C_P \sigma.
		\end{aligned}
	\end{equation}
\end{proof}

Based on Lemma 2, we then derive the error bound for the whole data set. We take supremum over $\left( {{\mathbf{h}^*},\mathbf{n} } \right) \in {\cal H}(B,s,\sigma )$, by $|S| \leq s$,
\begin{equation}
	\begin{aligned}
		\sup _{\mathbf{h}^*, \mathbf{n}}\left\{\left\|\mathbf{h}^{l+1}-\mathbf{h}^*\right\|_1\right\} \leq & {\mu}_1(s-1) \sup _{\mathbf{h}^*, \mathbf{n}}\left\{\left\|\mathbf{h}^l-\mathbf{h}^*\right\|_1\right\}\\
		&+s \theta^l+s C_P \sigma .
	\end{aligned}
\end{equation}
By $\theta^l=\sup _{\mathbf{h}^*, \mathbf{n}}\left\{{\mu}_2\left\|\mathbf{h}^l-\mathbf{h}^*\right\|_1\right\}+C_P \sigma$, we have
\begin{equation}
	\begin{split}
		\sup _{\mathbf{h}^*, \mathbf{n}}\left\{\left\|\mathbf{h}^{l+1}-\mathbf{h}^*\right\|_1\right\} \leq &(\mu_1s-\mu_1+\mu_2s) \sup _{\mathbf{h}^*, \mathbf{n}}\left\{\left\|\mathbf{h}^l-\mathbf{h}^*\right\|_1\right\}\\
		&+2 s C_P \sigma.
	\end{split}
\end{equation}
By induction, with $c=-log(\mu_1s-\mu_1+\mu_2s)$, $C=\frac{2 s C_P}{1-\mu_1s-\mu_2s+\mu_1}$, we obtain 
\begin{equation}
	\begin{aligned}
		&\sup _{\mathbf{h}^*, \mathbf{n}}\left\{\left\|\mathbf{h}^{l+1}-\mathbf{h}^*\right\|_1\right\} \\ \leq&(\mu_1s-\mu_1+\mu_2s)^{l+1} \sup _{\mathbf{h}^*, \mathbf{n}}\left\{\left\|\mathbf{h}^0-\mathbf{h}^*\right\|_1\right\}\\
		&+2 s C_P \sigma\left(\sum_{\tau=0}^{l+1}(\mu_1s-\mu_1+\mu_2s)^\tau\right) \\
		& \leq(\mu_1s-\mu_1+\mu_2s)^l s B+C \sigma\\
		&=s B \exp (-c l)+C \sigma .
	\end{aligned}
\end{equation}
Since $\|\mathbf{h}\|_2 \leq\|\mathbf{h}\|_1$ for any $\mathbf{h} \in \Re$, we can get the upper bound for $\ell_2$ norm:
\begin{equation}
	\begin{aligned}
		\sup _{\mathbf{h}^*, \mathbf{n}}\left\{\left\|\mathbf{h}^{l+1}-\mathbf{h}^*\right\|_2\right\} &\leq \sup _{\mathbf{h}^*, \mathbf{n}}\left\{\left\|\mathbf{h}^{l+1}-\mathbf{h}^*\right\|_1\right\} \\&\leq s B \exp (-c l)+C \sigma .
	\end{aligned}
\end{equation}
As long as $s \le \frac{{1 + {\mu _1}}}{{{\mu _1} + {\mu _2}}}$, $c=-log(\mu_1s-\mu_1+\mu_2s) \ge 0$, then the error bound \eqref{conclusion} holds uniformly for all $\left(\mathbf{h}^*, \mathbf{n}\right) \in \mathcal{H}(B, s, \sigma)$.

\bibliographystyle{IEEEtran}
\bibliography{IEEEabrv,ref_2}

\vfill

\end{document}